\RequirePackage[l2tabu, orthodox]{nag}		
\documentclass[10pt]{article}

%
%
\usepackage[T1]{fontenc}
\usepackage{lmodern}
\usepackage[frenchmath]{mathastext}							
\usepackage{natbib}       									
\usepackage{amsmath}                        
\numberwithin{equation}{section}
\usepackage{graphicx} 											
\usepackage{adjustbox,float}
\usepackage{enumitem}												
\usepackage{mdwlist}												
\usepackage[dvipsnames]{xcolor}							
\usepackage[plainpages=false, pdfpagelabels]{hyperref} 
	\hypersetup{
		colorlinks   = true,
		citecolor    = RoyalBlue, 
		linkcolor    = RubineRed, 
		urlcolor     = Turquoise
	}
\usepackage{amssymb}                        
\usepackage[mathscr]{eucal}                 
\usepackage{dsfont}													
\usepackage[paperwidth=8.5in,paperheight=11in,top=1.25in, bottom=1.25in, left=1.0in, right=1.0in]{geometry} 
\usepackage{mathtools}                      
\mathtoolsset{showonlyrefs=true}          
\linespread{1.3}                            
\usepackage{amsthm}                         
	\allowdisplaybreaks                       
	\theoremstyle{plain}
	\newtheorem{theorem}{Theorem}
	\numberwithin{theorem}{section}
	\newtheorem{lemma}[theorem]{Lemma}       	
	\newtheorem{proposition}[theorem]{Proposition}
	\newtheorem{corollary}[theorem]{Corollary}
	\theoremstyle{definition}
	\newtheorem{definition}[theorem]{Definition}
	
	\newtheorem{remark}[theorem]{Remark}

%
%
\newif\ifcomment

\commentfalse
\commenttrue 

\ifcomment
	\newcommand{\mcomment}[1]{\marginpar{\tiny\textbf{\color{blue} #1}}}
\else
	\newcommand{\mcomment}[1]{}
\fi

\newcommand{\E}{\mathbb{E}\thinspace }

\newcommand \vp {v}



\newcommand\Eb{\mathds{E}}

\newcommand\Pb{\mathds{P}}
\newcommand\Qb{\mathds{Q}}
\newcommand\Rb{\mathds{R}}


\newcommand\Ac{\mathscr{A}}

\newcommand\Dc{\mathscr{D}}

\newcommand\Fc{\mathscr{F}}

\newcommand\Lc{\mathscr{L}}


\newcommand\eps{\varepsilon}

\newcommand\sig{\sigma}
\newcommand\Sig{\Sigma}

\newcommand\gam{\gamma}

\newcommand\del{\delta}
\newcommand\Del{\Delta}
\newcommand\kap{\kappa}




\newcommand\gamo{\overline{\gamma}}




\newcommand\fv{\mathbf{f}}

\newcommand\Wv{\textbf{W}}
\newcommand\alv{{\boldsymbol\alpha}}
\newcommand\betav{{\boldsymbol\beta}}

\newcommand\lamv{{\boldsymbol\lambda}}

\newcommand\piv{{\boldsymbol\pi}}



\newcommand\Wt{\widetilde{W}}

\newcommand\St{\widetilde{S}}
\newcommand\Tt{\widetilde{T}}

\newcommand\Yt{\widetilde{Y}}

\newcommand\Ut{\widetilde{U}}

\newcommand\kapt{\widetilde{\kappa}}

\newcommand\alvt{\widetilde{\alv}}
\newcommand\betavt{\widetilde{\betav}}

\newcommand\pit{\widetilde{\pi}}
\newcommand\mut{\widetilde{\mu}}
\newcommand\pivt{\widetilde{\piv}}


\newcommand\dd{\mathrm{d}}
\newcommand\ee{\mathrm{e}}

\newcommand\Cov{\operatorname{Cov}}
\newcommand\Var{\operatorname{Var}}

\newcommand\altt{\varrho}
\providecommand{\keywords}[1]{\textbf{\textit{Keywords: }} #1}


%
%
\begin{document}

\title{Optimal Dynamic Basis Trading}

\author{
B. Angoshtari
\thanks{Applied Mathematics Department,  University of Washington, Seattle WA 98195.  \textbf{e-mail}: \url{bahmang@uw.edu}}
\and
T. Leung
\thanks{Applied Mathematics Department,  University of Washington, Seattle WA 98195.  \textbf{e-mail}: \url{timleung@uw.edu}}
}

\date{This version: \today}

\maketitle

\begin{abstract}
We study the problem of dynamically trading a \mbox{futures contract} and its underlying asset under a stochastic basis model.  {The basis evolution is modeled by a stopped scaled Brownian bridge to account for non-convergence of the basis at maturity.} The optimal trading strategies are determined  from  a utility maximization problem under hyperbolic absolute risk aversion (HARA) risk preferences. By analyzing the associated Hamilton-Jacobi-Bellman equation, we derive the exact conditions under which the \mbox{equation} admits  a solution and   solve the utility maximization explicitly.  A series of numerical examples are provided to illustrate the optimal strategies and examine the effects of model parameters.
\end{abstract}

\keywords{futures \and stochastic basis \and cash and carry \and scaled Brownian bridge \and risk aversion} 

\noindent \textbf{JEL Classification}  C41 G11 G12
%
%
%
%
%
%
%


%
%

\section{Introduction}

\emph{Basis trading}, also known as \emph{cash-and-carry} trading in the context of futures contracts, is a core   strategy for many speculative traders  who seek to profit from anticipated convergence of spot and futures prices. The practice usually involves taking a long position in the under-priced asset and a short position in the over-priced one, and closing the positions when convergence occurs. In reality, however, basis trading is far from a riskless arbitrage. Unexpected changes in market factors such as interest rate, cost of carry, or dividends can diminish profitability. Moreover, market frictions, such as transaction costs and collateral payments, can turn seemingly certain arbitrage opportunities into disastrous trades. It is also possible that the basis does not converge at maturity. This non-convergence phenomenon  was commonly observed in the grains markets. As reported in \cite{irwin2011}, \cite{adjemian2013}, and \cite{Garcia2015}, for most of 2005-2010  futures contracts expired up to 35\% above the spot price. As a result, some cash-and-carry traders may choose to close their positions prior to maturity to limit risk exposure. 

Early works on pricing of futures contracts, such as \cite{CoxIngersollRoss1981} and \cite{Modest1983}, established no-arbitrage relationships between the spot price and associated futures prices.  Assuming imperfections, such as transaction cost, these relationships take the form of pricing bounds, which can be used for identifying profitable trades. Among related  studies on \mbox{basis} trading,  \cite{BrennanSchwartz1988} and \cite{BrennanSchwartz1990}  assumed that the basis of an index futures follows a scaled Brownian bridge and that trading the assets is subject to position limits and transaction costs. They calculated the value of the embedded timing options to trade the basis, and used the option prices to devise open-hold-close strategies involving the index futures and the underlying index.  Also under a Brownian bridge model,   \cite{dai2011optimal} provided  an alternative strategy and specification of transaction costs. Another related work by \cite{LiuLongstaff2004} assumed that the basis follows a scaled Brownian bridge and   the investor is subject to a collateral constraint. They derived the  closed-form   strategy that maximizes  the expected logarithmic utility of terminal wealth, and showed the optimality of taking smaller arbitrage positions well within  the collateral constraint.

{
In the aforementioned studies on optimal basis trading, the market model contains arbitrage. Indeed, it is assumed that the basis, which is a tradable asset, converges to zero at a fixed future time. In this paper, we consider a different scenario where the basis does not vanish    at maturity. More precisely, we model the stochastic  basis by a scaled Brownian bridge that is stopped before its achieves convergence. Our proposed model is motivated by the market phenomenon of non-convergence as well as  the possibility of  traders  closing their futures positions prior to maturity. It also has the added advantage that it generates an arbitrage-free futures market (see  Proposition \ref{prop:NA} below).
}

{
We consider a general class of risk preferences by using hyperbolic absolute risk aversion (HARA) utility functions which includes power (CRRA) and exponential (CARA) utilities. We exclude, however, the case of decreasing risk-tolerance functions and, in particular, the quadratic utility function.}
Among our findings, we derive in closed form the optimal dynamic basis trading strategy maximizing the expected utility of terminal wealth.  In solving our portfolio optimization problem, the critical question of well-posedness arises. To that end, we find the exact conditions under which the maximum expected utility is finite. For the case where the expected utility explodes, we derive the critical investment horizon at which the explosion happens. See Section \ref{sec:HJB} for further details. Note that achieving infinite expected utility has been observed in the following contexts: infinite horizon portfolio  optimization (\cite{Merton1969}), optimal execution (\cite{bulthuis2016optimal}), and finite horizon optimal trading of assets with mean-reverting return (\cite{KimOmberg1996};\cite{KornKraft2004}). The latter studies, respectively, have coined the terms ``\emph{nirvana strategies}'' and ``\emph{I-unstable}'' for investment strategies that yield infinite expected utility in finite investment horizon.

Our model is related to a number of studies in finance  involving Brownian bridges. Applications include  modeling the flow of information in the market. For example, \citet*{brody2008information} used a Brownian bridge as the noise in the information about a future market event, and derived option pricing formulae based on this asset price dynamics and   market information flow. \citet*{cartea2016algorithmic2} utilized a randomized Brownian bridge (rBb) to model the mid-price of an asset with a random end-point perceived by an informed trader, and   determined the optimal placements  of market and limit orders.  \citet*{leung2017optimal} also applied a rBb model to investigate the optimal timing to sell different option positions. 

Basis trading involves trading a single futures contract along with its spot asset. A similar alternative strategy is to trade multiple futures with the spot but different maturities. For instance, \cite{LeungYan2018,LeungYan2019} considered futures prices generated from the same two-factor stochastic spot model and optimize dynamic trading strategies.  In optimal convergence trading, asset prices or their spreads are often modeled by a stationary mean-reverting process. See, for example, \cite{KimOmberg1996}, \cite{KornKraft2004}, \cite{Primbsetal2008}, \cite{ChiuWong2011}, \cite{LiuTimmermann2012}, \cite{tourin2013dynamic}, \cite{cartea2016algorithmic}, \cite{lee2016pairs}, \cite{LeungXin2016}, \cite{KitapLeung2017}, and  \cite{CarteaGanJaimungal2018}. In these studies, however, prices or spreads are not scheduled to converge at a  future time. Using a scaled Brownian bridge, we can control  the basis's tendency to converge towards the expiration date.

The rest of the paper is organized as follows. In Section \ref{sec:model}, we introduce our market model and formulate the dynamic basis trading problem.  In Section \ref{sec:HJB}, we solve the associated Hamilton-Jacobi-Bellman partial differential equation. Section \ref{sec:Opt} contains  our results on the  optimal basis trading strategy, along with a series of illustrative   numerical examples. Section \ref{sect-conclude} concludes.  Longer proofs are included in the Appendix.


%
%
\section{Problem setup}\label{sec:model}
We consider an investor who trades a riskless asset, a futures contract $F$ with maturity $T$ and its underlying asset $S$, over a period $[0,T]$. For simplicity, we assume that $S$ does not pay dividends and has no storage cost. We also assume that the interest rate is zero, by taking the riskless asset as the numeraire. Under these assumptions, the futures price, the forward price, and the spot price should be  equal. In practice, however, market frictions and inefficiencies may render  futures price   different from the spot or forward price. As discussed above, the futures price may not even converge to the spot price at maturity.

Motivated by these market  imperfections, we propose a stochastic model that incorporates the comovements of  the futures and spot prices,  and   captures the tendency of the basis to approach zero but not necessarily  vanish at maturity. In essence, the spot and futures prices are drive by correlated Brownian motions and  the stochastic basis is represented by scaled Brownian bridge, as we will show in Lemma \ref{lem:Bb} below.

To describe our model, we assume that the volatility normalized\footnote{See Remark \ref{rem:VolNorm}.} futures price $(F_t)_{t\in[0,T]}$ and the volatility normalized spot price   $(S_t)_{t\in[0,T]}$ satisfy
\begin{align}\label{eq:S}
	\frac{\dd S_t}{S_t} &= \mu_1 \dd t + \dd W_{t,1},
\intertext{and}
	\label{eq:F}
	\frac{\dd F_t}{F_t} &=
	\left(\mu_2 + \frac{\kap\,Z_t}{T-t+\eps}\right)\dd t
	+ \rho \,\dd W_{t,1} + \sqrt{1-\rho^2}\,\dd W_{t,2},
\intertext{where  $(Z_t)_{0\le t\le T}$ is the log-value of the stochastic \emph{basis} defined by}
	\label{eq:Z}
	Z_t &:= \log\left(\frac{S_t}{F_t}\right);\quad 0\le t\le T.
\end{align}
Here, $\Wv_t=(W_{t,1},W_{t,2})^\top$ is a standard Brownian motion in a filtered probability space $\big(\Omega,\Fc,\Pb,(\Fc_t)_{t\ge0}\big)$ where $(\Fc_t)_{t\ge0}$ is generated by the Brownian motion and satisfies the usual conditions. The parameters, $\mu_1$ and $\mu_2$, with $\mu_1, \mu_2\in \mathbb{R}$,  represent the Sharpe ratios of $S$ and $F$ respectively (see Remark \ref{rem:VolNorm} below). As specified in \eqref{eq:F}, the futures price has the tendency to revert around $S$. Indeed, if $F_t$ is significantly higher than $S_t$, then $Z_t$ becomes negative. Consequently, the drift of $F_t$ can also be negative, and thus driving the value of $F_t$ downward to be closer to $S_t$. The opposite will hold if $F_t$ is significantly lower  than $S_t$. The  speed of this mean reversion  is reflected  by the constant $\kap > 0$.  On the other hand, the two prices need not coincide at maturity. The level of non-convergence is  controlled by the parameter $\eps > 0$. A smaller $\eps$ means that at maturity  $S_T$ and $F_T$ tend to be closer. In fact, if $\eps=0$,   $S_T$ and $F_T$ will be exactly the same. Lastly, we incorporate correlation between the two price processes through the parameter  $|\rho|<1$.  


\begin{remark}\label{rem:VolNorm}
	Let $(\St_t)$ satisfying
	\begin{align}
		\dd \St_t = \St_t\left(\mu_t \dd t + \sig_t \dd W_t\right),
	\end{align}
	be the quoted price of an asset. The \emph{``volatility normalized price''} of the asset $(S_t)$ is given by
	\begin{align}
		\dd S_t = \frac{S_t}{\sig_t\St_t}\, \dd \St_t = S_t\left(\frac{\mu_t}{\sig_t} \dd t + \dd W_t\right),\quad S_0=\St_0,
	\end{align}
	such that the volatility of $(S_t)$ is 1.
	
	Furthermore, let $(\pit_t)$ be the amount invested in the asset. The \emph{``volatility adjusted position''} in asset $S$ is given by $\pi_t := \sig_t \pit_t$ such that
	\begin{align}
		\dd \pit_t = \pit_t \frac{\dd \St_t}{\St_t} = \pi_t \frac{\dd S_t}{S_t}.
	\end{align}
	In other words, when working with volatility normalized prices,  volatility adjusted positions  should be  used.  To find the \$ positions, we only need to divide the volatility adjusted positions by   volatility, i.e. $\pit_t = \frac{\pi_t}{\sig_t}$. 
\end{remark}

	For the rest of the article, we use the terms prices and positions in place of volatility normalized prices and volatility adjusted positions.

As we show in the following lemma, the price dynamics \eqref{eq:S} and \eqref{eq:F} imply that the stochastic basis $(Z_t)$ is a scaled Brownian bridge that converges to zero at $T+\eps$. This implies  that $S$ and $F$ converge at $T+\eps$, i.e. $\lim_{t\to T+\eps}(F_t / S_t) = 1$, $\Pb$-almost surely. However, since the futures contract expires at $T$ and $T+\eps$ is past maturity,  such a  convergence is not realized in the market. In a limiting case of our model where $\eps\to0^+$, the spot and future prices converge at $T$ and the market model admits arbitrage. In contrast, for any $\eps>0$,   the market is arbitrage-free. See Proposition \ref{prop:NA} and Remark \ref{rem:arb}  below.

\begin{lemma}\label{lem:Bb}
	The stochastic basis $(Z_t)_{0\le t\le T}$ satisfies 
	\begin{align}\label{eq:Bb}
		\dd Z_t &= \mu_1 - \mu_2 -\frac{\kap\, Z_t}{T-t+\eps}\dd t + (1-\rho)\dd W_{t,1} - \sqrt{1-\rho^2}\, \dd W_{t,2},		
	\end{align}
	for $0\le t\le T$. In particular, if we consider the solution of this SDE over $[0,T+\eps]$, then $Z_{T+\eps} = 0$, $\Pb$-almost surely.
\end{lemma}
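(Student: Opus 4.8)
The plan is to prove the two assertions separately: first deriving the dynamics \eqref{eq:Bb}, and then establishing the terminal value $Z_{T+\eps}=0$. For the dynamics I would simply apply Itô's formula to $Z_t=\log S_t-\log F_t$ using \eqref{eq:S} and \eqref{eq:F}. Because both $S$ and $F$ have been volatility-normalized to unit volatility, the Itô correction in each of $\dd\log S_t$ and $\dd\log F_t$ is exactly $-\tfrac12\,\dd t$, so these two corrections cancel in the difference. Subtracting $\dd\log F_t$ from $\dd\log S_t$ then returns precisely the drift $\mu_1-\mu_2-\kap Z_t/(T-t+\eps)$ together with the diffusion $(1-\rho)\,\dd W_{t,1}-\sqrt{1-\rho^2}\,\dd W_{t,2}$ of \eqref{eq:Bb}. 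This first step is a direct computation.

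For the convergence, set $\tau:=T+\eps$, write $a:=\mu_1-\mu_2$, and let $M_t:=(1-\rho)W_{t,1}-\sqrt{1-\rho^2}\,W_{t,2}$, which is a continuous martingale with $\langle M\rangle_t=2(1-\rho)\,t$. Then \eqref{eq:Bb} is the linear SDE $\dd Z_t=\big(a-\kap Z_t/(\tau-t)\big)\dd t+\dd M_t$ on $[0,\tau)$, which I would solve with the integrating factor $(\tau-t)^{-\kap}$. Since $\dd\big[(\tau-t)^{-\kap}\big]=\kap(\tau-t)^{-\kap-1}\dd t$, one obtains $\dd\big[(\tau-t)^{-\kap}Z_t\big]=(\tau-t)^{-\kap}\big(a\,\dd t+\dd M_t\big)$, and hence the explicit representation
\begin{equation}
Z_t=(\tau-t)^{\kap}\left[\tau^{-\kap}Z_0+a\int_0^t(\tau-s)^{-\kap}\,\dd s+\int_0^t(\tau-s)^{-\kap}\,\dd M_s\right].
\end{equation}

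It then remains to show that each of the three bracketed terms, after multiplication by the prefactor $(\tau-t)^{\kap}$, tends to $0$ as $t\uparrow\tau$. The initial term vanishes at once because $\kap>0$. For the deterministic integral, an elementary evaluation shows that the product is a constant multiple of $(\tau-t)-\tau^{1-\kap}(\tau-t)^{\kap}$ (to be replaced by $(\tau-t)\log\frac{\tau}{\tau-t}$ in the borderline case $\kap=1$), which clearly $\to0$. The genuine obstacle is the stochastic term $Y_t:=(\tau-t)^{\kap}N_t$ with $N_t:=\int_0^t(\tau-s)^{-\kap}\,\dd M_s$, whose inner local martingale has quadratic variation $\langle N\rangle_t=2(1-\rho)\int_0^t(\tau-s)^{-2\kap}\,\dd s$. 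Here I would split into cases: if $\kap<\tfrac12$, then $\langle N\rangle_t$ stays bounded as $t\uparrow\tau$, so $N_t$ converges almost surely and $Y_t\to0$; if $\kap\ge\tfrac12$, then $\langle N\rangle_t$ diverges, and I would control $N_t$ near $\tau$ through the Dambis--Dubins--Schwarz time change combined with the law of the iterated logarithm, obtaining $|Y_t|=O\big((\tau-t)^{1/2}\sqrt{\log\log\tfrac{1}{\tau-t}}\,\big)$ a.s., which again $\to0$ since the power $1/2$ dominates the slowly varying factor. The matching second-moment bound $\E\,Y_t^2=(\tau-t)^{2\kap}\langle N\rangle_t\to0$ both confirms the limit and lets one first pin it down along a sequence $t_n\uparrow\tau$ via Borel--Cantelli, then pass to the full limit using continuity of $Z$. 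Combining the three limits yields $\lim_{t\uparrow\tau}Z_t=0$ $\Pb$-almost surely, i.e.\ $Z_{T+\eps}=0$. I expect the almost-sure control of the stochastic integral in the divergent regime $\kap\ge\tfrac12$ to be the main difficulty.
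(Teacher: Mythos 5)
Your proposal is correct and follows essentially the same route as the paper: derive \eqref{eq:Bb} by It\^o's formula (with the unit-volatility It\^o corrections cancelling), solve the linear SDE with the integrating factor $(T+\eps-t)^{-\kap}$, and show each term of the explicit representation vanishes as $t\uparrow T+\eps$. The only difference is that the paper delegates both the solution formula and the almost-sure vanishing of the stochastic integral to Eq.~(6.6) of Karatzas--Shreve, whereas you carry out that last, genuinely nontrivial step explicitly via the Dambis--Dubins--Schwarz time change and the law of the iterated logarithm, which is a sound and in fact more self-contained treatment.
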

\begin{proof}
	We obtain  \eqref{eq:Bb} by applying \eqref{eq:S}-\eqref{eq:Z} to  $\dd Z_t = \frac{\dd S_t}{S_t} - \frac{\dd F_t}{F_t}$. By Eq. (6.6) on page 354 of \cite{KartzasShreve1991}, the unique strong solution of \eqref{eq:Bb} is
	\begin{align}\label{eq:Zsol}
	\begin{split}
		Z_t = &Z_0 \left(1-\frac{t}{T+\eps}\right)^\kap + (\mu_1-\mu_2) A(t)\\
		&+ \int_0^t \left(\frac{T-t+\eps}{T-u+\eps}\right)^\kap \left((1-\rho)\dd W_{u,1} - \sqrt{1-\rho^2}\, \dd W_{u,2}\right);\quad 0\le t\le T,
	\end{split}
	\end{align}
	where we have defined
	\begin{align}\label{eq:a}
		A(t) :=
		\begin{cases}
			\displaystyle
			\frac{1}{\kap-1}\left(T-t+\eps-\frac{(T-t+\eps)^\kap}{(T+\eps)^{\kap-1}}\right);
			&\quad \text{if } \kap\ne1,\\
			\\
			\displaystyle
			(T-t+\eps)\log\left(\frac{T+\eps}{T-t+\eps}\right);
			&\quad \text{if } \kap=1.
		\end{cases}		
	\end{align}
	Since $\kap>0$, taking limits in \eqref{eq:a} yields  $\lim_{t\to T+\eps} A(t) = 0$. From \eqref{eq:Zsol}, it   follows that $\Pb(Z_{T+\eps}=0)=1$.\qed 
\end{proof}
\medskip

\begin{remark}
	Lemma \ref{lem:Bb} provides an alternative representation of our model. Indeed, one can define the underlying asset $S$ by \eqref{eq:S} and the basis $Z$ by \eqref{eq:Bb}. It will then follows that the futures price defined by $F_t:=S_t \ee^{-Z_t}$ satisfies \eqref{eq:F}.\qed
\end{remark}

\medskip

As a corollary, we now describe  the distribution of the basis $(Z_t)_{t\ge0}$.

\begin{corollary}\label{cor:Gaussian}
	Assume that $Z_0$ is deterministic. Then, the basis $(Z_t)_{0\le t\le T}$ is a Gauss-Markov process with mean function 
	\begin{align}\label{eq:mean}
		m(t) &:= \Eb(Z_t) = Z_0 \left(1-\frac{t}{T+\eps}\right)^\kap 
		+ (\mu_1-\mu_2) A(t),
	\intertext{and covariance function}
		\sig(s,t)&:= \Cov(Z_s,Z_t) =
		\begin{cases}\displaystyle
			\frac{2(1-\rho)}{2\kap-1}\frac{(T-t+\eps)^\kap}{(T-s+\eps)^{\kap-1}} \\
			\qquad\left(1-\left(1-\frac{s}{T+\eps}\right)^{2\kap-1}\right); & \text{if }\kap\ne\frac{1}{2},
			\vspace{1em}\\
			\displaystyle
			2(1-\rho)\sqrt{(T-t+\eps)(T-s+\eps)} \\
			\qquad\log\left(\frac{T+\eps}{T-s+\eps}\right); & \text{if }\kap=\frac{1}{2},
		\end{cases}
		\label{eq:Cov}
	\end{align}
	for all $0< s\le t<T$. Here, $A(t)$ is given by \eqref{eq:a}. 
\end{corollary}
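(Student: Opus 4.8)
The plan is to read everything off the closed-form solution \eqref{eq:Zsol} established in Lemma \ref{lem:Bb}. Since $Z_0$ is assumed deterministic, \eqref{eq:Zsol} displays $Z_t$ as a deterministic function of $t$ plus the Wiener integral
\begin{align}
	M_t := \int_0^t \left(\frac{T-t+\eps}{T-u+\eps}\right)^\kap\left((1-\rho)\dd W_{u,1} - \sqrt{1-\rho^2}\,\dd W_{u,2}\right),
\end{align}
whose integrand is deterministic. For any finite set of times the resulting vector $(Z_{t_1},\dots,Z_{t_n})$ is therefore an affine image of the jointly Gaussian increments of $\Wv$, so $(Z_t)$ is a Gaussian process. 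The Markov property is immediate from the fact that $Z$ is the unique strong solution of the (time-inhomogeneous, linear) SDE \eqref{eq:Bb}, hence a diffusion. Thus $(Z_t)$ is Gauss--Markov, and it only remains to compute its first two moments.

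The mean \eqref{eq:mean} follows at once by taking expectations in \eqref{eq:Zsol}, since the Wiener integral $M_t$ has zero mean, leaving only the two deterministic summands. For the covariance I would first pull the $t$-dependent factor out of the integral, writing $M_t = (T-t+\eps)^\kap N_t$ with
\begin{align}
	N_t := \int_0^t \frac{1}{(T-u+\eps)^\kap}\left((1-\rho)\dd W_{u,1} - \sqrt{1-\rho^2}\,\dd W_{u,2}\right),
\end{align}
which is a square-integrable martingale (the integrand is bounded because $\eps>0$). Since $\Cov(Z_s,Z_t)=\Eb[M_sM_t]$ and, for $s\le t$, $\Eb[N_sN_t]=\Eb[N_s^2]$ by the martingale property, we get $\Cov(Z_s,Z_t)=(T-s+\eps)^\kap(T-t+\eps)^\kap\,\Eb[N_s^2]$. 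The It\^o isometry then yields $\Eb[N_s^2]=\big((1-\rho)^2+(1-\rho^2)\big)\int_0^s (T-u+\eps)^{-2\kap}\dd u = 2(1-\rho)\int_0^s (T-u+\eps)^{-2\kap}\dd u$, where I used $W_1\ind W_2$ together with the collapse $(1-\rho)^2+(1-\rho^2)=2(1-\rho)$.

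It remains to evaluate $\int_0^s (T-u+\eps)^{-2\kap}\dd u$ by the substitution $v=T-u+\eps$, which splits into the power case ($\kap\ne\tfrac12$) and the logarithmic case ($\kap=\tfrac12$), reproducing the two branches of \eqref{eq:Cov}. The only genuine work is the bookkeeping in the first case: after multiplying back by $(T-s+\eps)^\kap(T-t+\eps)^\kap$ one must combine the powers of $(T-s+\eps)$ and $(T+\eps)$ and recognize $(T-s+\eps)^{2\kap-1}(T+\eps)^{1-2\kap}=\big(1-\tfrac{s}{T+\eps}\big)^{2\kap-1}$ to arrive at the stated bracketed form. I expect this algebraic simplification, rather than any probabilistic step, to be the main (and entirely routine) obstacle. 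Note in passing that the factorized structure $\Cov(Z_s,Z_t)=g(s)\,(T-t+\eps)^\kap$ for $s\le t$ is exactly the separable form characterizing Markov Gaussian processes, giving an independent confirmation of the Markov property.
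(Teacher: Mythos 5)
Your proposal is correct and follows exactly the route the paper intends: its proof of Corollary \ref{cor:Gaussian} consists of the single sentence that the result ``follows from direct calculations using \eqref{eq:Zsol},'' and your computation (zero-mean Wiener integral for \eqref{eq:mean}, It\^o isometry with the collapse $(1-\rho)^2+(1-\rho^2)=2(1-\rho)$ and the substitution $v=T-u+\eps$ for \eqref{eq:Cov}) is precisely that calculation, carried out correctly in both the $\kap\ne\tfrac12$ and $\kap=\tfrac12$ branches. Your added justification of the Gauss--Markov property and the observation about the separable covariance structure go slightly beyond what the paper records, but are consistent with it.
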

\begin{proof}
	This result   follows from direct calculations using  \eqref{eq:Zsol}.	\qed
\end{proof}

\begin{figure}[t]
\centerline{
\adjustbox{trim={0.05\width} {0.05\height} {0.09\width} {0.11\height},clip}
{\includegraphics[scale=0.3, page=1]{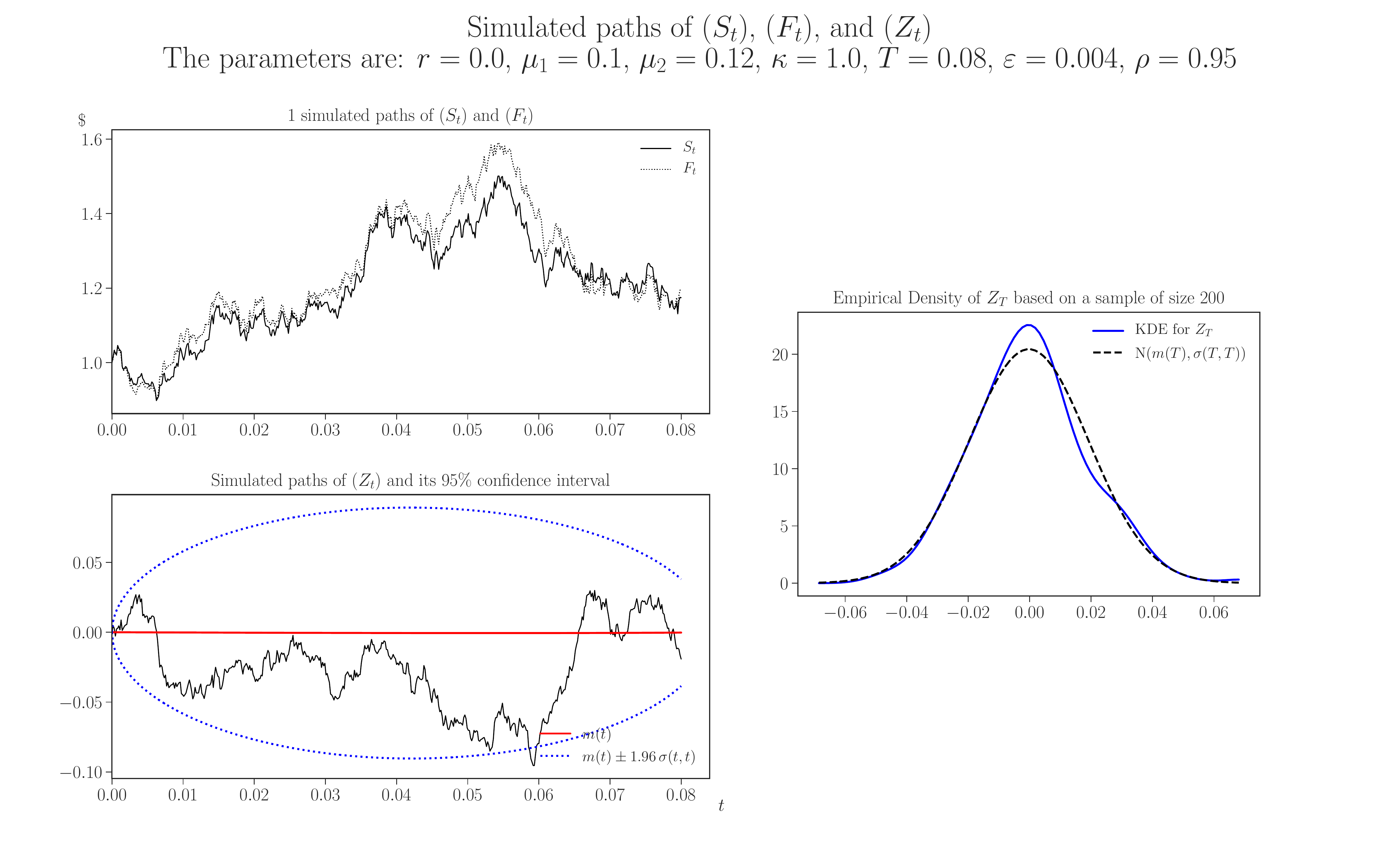}}
}
\caption{The left panel shows the simulated paths of $S$ and $F$ (top), as well as  $Z$ and its confidence interval over time (bottom). On the right, the theoretical and estimated (based on simulated paths) densities of $Z_T$  are plotted. Parameters: $\mu_1=0.1$, $\mu_2=0.12$, $\kap=1$, $T=0.08$, $\eps=0.004$, $\rho=0.95$.
\label{eq:SFZ_sim}}
\end{figure}

From Corollary \ref{cor:Gaussian}, it follows that $Z_T\sim N\left(m(T),\sig(T,T)\right)$. Furthermore, as $\eps\to0$, we have $\Eb(Z_T) = m(T)\to0$ and $\Var(Z_T) = \sig(T,T)\to 0$.  Using Lemma \ref{lem:Bb} and its corollary, it is straightforward to devise a time-discretization scheme to simulate the paths of  $S$, $F$, and $Z$, as shown in Figure \ref{eq:SFZ_sim}. As a confirmation,  we see the empirical density of $Z_T$ based on 200 simulated paths closely matching with the theoretical density, i.e. $N\big(m(T), \sig(T,T)\big)$. 


Next, we prove that our market model is arbitrage-free. Let us first define the \emph{market price of risk} function
\begin{align}\label{eq:lamb}
	\lamv(t,z) &:= \Sig^{-1}
	\begin{pmatrix}
		\mu_1\\
		\mu_2 + \frac{\kap\,z}{T-t+\eps}
	\end{pmatrix}
	=
	\begin{pmatrix}
		\mu_1\\
		\frac{1}{\sqrt{1-\rho^2}}\left(\mu_2-\rho\mu_1 + \frac{\kap}{T-t+\eps}\,z\right)
	\end{pmatrix},
\end{align}
for $0\le t\le T$, where
\begin{align}\label{eq:sig}
	\Sig := 
	\begin{pmatrix}
		1 &0\\
		\rho &\sqrt{1-\rho^2}
	\end{pmatrix}.
\end{align}

\begin{proposition}\label{prop:NA}
	The market model in \eqref{eq:S}--\eqref{eq:Z} is arbitrage-free. In particular, the risk-neutral measure $\Qb$ is given by the Radon-Nikodym derivative $\frac{\dd\Qb}{\dd \Pb} = Y_T$, where the process $(Y_t)_{t\in[0,T]}$ satisfies
\begin{align}\label{eq:MPR}
\begin{cases}\displaystyle
	\frac{\dd Y_t}{Y_t} = -\lamv(t,Z_t)^\top \dd \Wv_t;\quad 0\le t\le T,\\
	Y_0=1,
\end{cases}
\end{align}
and is a $\Pb$-martingale.
\end{proposition}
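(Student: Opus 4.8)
The plan is to establish no-arbitrage by exhibiting $\Qb$ as an equivalent local martingale measure and invoking the Fundamental Theorem of Asset Pricing (Delbaen--Schachermayer): once there is an equivalent measure under which the traded assets $S$ and $F$ are local martingales, the model satisfies NFLVR and is in particular arbitrage-free. Since $|\rho|<1$, the matrix $\Sig$ in \eqref{eq:sig} is invertible, so $\lamv$ in \eqref{eq:lamb} is well defined and agrees with the explicit second expression there; and $(Y_t)$ in \eqref{eq:MPR} is the Dol\'eans--Dade exponential of $-\int_0^{\cdot}\lamv(s,Z_s)^\top\,\dd\Wv_s$. As a nonnegative stochastic exponential it is automatically a local martingale, hence a supermartingale with $\Eb[Y_T]\le1$. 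The entire content of the proposition therefore reduces to the single nontrivial point that $(Y_t)$ is a \emph{true} $\Pb$-martingale, i.e. $\Eb[Y_T]=1$, and I expect this to be the main obstacle.

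To establish the martingale property I would verify a localized, stepwise form of Novikov's condition, and this is exactly where the assumption $\eps>0$ is essential. Because $T-t+\eps\ge\eps>0$ for all $t\in[0,T]$, the coefficient $\kap/(T-t+\eps)$ is bounded on $[0,T]$; together with $|\rho|<1$ this makes $\lamv(t,z)$ an affine function of $z$ with coefficients bounded uniformly in $t\in[0,T]$, so that $\|\lamv(t,Z_t)\|^2\le C(1+Z_t^2)$ for some constant $C$. By Corollary \ref{cor:Gaussian}, $(Z_t)$ is a Gaussian process with continuous paths and uniformly bounded mean and variance on $[0,T]$. Partitioning $[0,T]$ into subintervals $[t_k,t_{k+1}]$ of sufficiently small mesh $\delta$, one controls $\int_{t_k}^{t_{k+1}}\|\lamv(s,Z_s)\|^2\,\dd s$ by $C\delta\,(1+\sup_{[t_k,t_{k+1}]}Z_s^2)$; by Fernique's theorem the supremum of a Gaussian process over a compact interval has a finite exponential-square moment of sufficiently small order, so a short enough $\delta$ keeps the conditional exponential moment $\Eb\big[\exp\big(\tfrac12\int_{t_k}^{t_{k+1}}\|\lamv(s,Z_s)\|^2\,\dd s\big)\mid\Fc_{t_k}\big]$ finite. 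This yields the martingale property on each subinterval, and chaining across the partition via the tower property gives $\Eb[Y_T]=1$; a nonnegative local martingale with $\Eb[Y_T]=\Eb[Y_0]=1$ is then a true martingale on $[0,T]$.

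With the martingale property in hand, $\Qb$ defined by $\dd\Qb/\dd\Pb=Y_T$ is a probability measure equivalent to $\Pb$, and by Girsanov's theorem $\Wv^\Qb_t:=\Wv_t+\int_0^t\lamv(s,Z_s)\,\dd s$ is a $\Qb$-Brownian motion. Substituting $\dd\Wv_t=\dd\Wv^\Qb_t-\lamv(t,Z_t)\,\dd t$ into \eqref{eq:S}--\eqref{eq:F}, the defining relation $\lamv=\Sig^{-1}\big(\mu_1,\,\mu_2+\kap z/(T-t+\eps)\big)^\top$ is precisely what cancels the appreciation rates: the first component gives $\dd S_t/S_t=\dd W^\Qb_{t,1}$, while for $F$ one checks the identity $\rho\mu_1+\sqrt{1-\rho^2}\,\lambda_2(t,Z_t)=\mu_2+\kap Z_t/(T-t+\eps)$, so that $\dd F_t/F_t=\rho\,\dd W^\Qb_{t,1}+\sqrt{1-\rho^2}\,\dd W^\Qb_{t,2}$. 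Hence $S$ and $F$ are positive $\Qb$-local martingales, so $\Qb$ is an equivalent local martingale measure and the model is arbitrage-free; the invertibility of $\Sig$ makes $\lamv$, and therefore $\Qb$, the unique such measure, justifying the phrase ``the'' risk-neutral measure. Apart from the true-martingale property of $Y$ above, everything here is the routine Girsanov computation that the market price of risk was constructed to satisfy.
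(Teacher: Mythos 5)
Your argument is correct, and it reaches the one genuinely nontrivial point---that the stochastic exponential $(Y_t)$ is a \emph{true} $\Pb$-martingale---by a different route than the paper. The paper's proof in Appendix \ref{app:NA} first rotates the Brownian motion, applies the deterministic time change \eqref{eq:timechange} together with Knight's theorem so that $Z$ becomes an explicit functional of the time-changed Brownian path, and then invokes the Bene\v{s}-type criterion (Corollary 3.5.16 of Karatzas--Shreve): the coefficient functional has linear growth in $\sup_{v\le \altt}|x(v)|$, which is the key inequality \eqref{eq:Fail} and is exactly what fails when $\eps=0$. You instead use the localized (stepwise) Novikov criterion: the uniform bound $\|\lamv(t,z)\|^2\le C(1+z^2)$ (again available only because $\eps>0$ keeps $\kap/(T-t+\eps)$ bounded), the Gaussianity of $Z$ from Corollary \ref{cor:Gaussian}, and Fernique's theorem to get a small exponential-square moment for $\sup_t|Z_t|$, then chain over a fine partition. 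Both proofs isolate $\eps>0$ as the decisive hypothesis. Your route needs no time change or Knight's theorem and uses only the marginal law of $Z$ plus standard Novikov machinery, at the cost of importing Fernique (which must be applied to the centered process $Z-m$; since $m$ is bounded this is harmless, and in fact one could replace Fernique by the elementary Gaussian tail of $\sup|B|$ using the representation \eqref{eq:Zsol}). Two small points of care: the standard stepwise Novikov condition (Karatzas--Shreve, Corollary 3.5.14) asks for finiteness of the \emph{unconditional} expectation $\Eb\bigl[\exp\bigl(\tfrac12\int_{t_k}^{t_{k+1}}\|\lamv(s,Z_s)\|^2\,\dd s\bigr)\bigr]$ on each subinterval, which is what Fernique gives you directly, so you should state it that way rather than as a conditional moment; and the Fernique constant is fixed once for the whole horizon, after which $\delta$ is chosen---you have this ordering right. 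Your concluding Girsanov computation and the appeal to NFLVR match what the paper does implicitly, and the uniqueness remark is a correct (if unneeded) completeness observation.
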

\begin{proof}
	See Appendix \ref{app:NA}. \qed
	%
\end{proof}

	 Proposition \ref{prop:NA} shows the importance of having $\eps>0$. For $\eps=0$, the market model \eqref{eq:S}-\eqref{eq:F} is not arbitrage-free. Indeed, setting $\eps=0$ in the proof of Lemma \ref{lem:Bb} yields that $\Pb(S_T=F_T)=\Pb(Z_T=0)=1$. Therefore, if there is a risk-neutral measure $\Qb$, then we must have
	\begin{align}
		S_t = \Eb^\Qb(S_T) = \Eb^\Qb(F_T) = F_t;\quad 0\le t\le T,
	\end{align}
	which contradicts \eqref{eq:S}-\eqref{eq:F}.
	
	\begin{remark}\label{rem:arb}
With reference to the proof of Proposition \ref{prop:NA} in Appendix \ref{app:NA}, we can point out  which part would fail if $\eps=0$. Specifically, for $\eps=0$, the key inequality \eqref{eq:Fail} does not hold in the proof.
\end{remark} 

We now discuss the dynamic trading problem faced by the investor. Let $\pit_{t,1}$ be the cash amount  invested in $S$, and   $\pit_{t,2}$  the notional value\footnote{That is, the number of futures contracts held multiplied by the futures price.} invested in $F$, for $t\in[0,T]$. Define the volatility adjusted positions\footnote{See Remark \ref{rem:VolNorm}.} by $\pi_{t,i}:= \sig_{t,i} \pit_{t,i}$, $i\in\{1,2\}$, where $(\sig_{1,t})$ and $(\sig_{2,t})$ are the volatilities of the spot and futures prices, respectively. Then, the trading  wealth $(X_t)_{0\le t\le T}$ follows
\begin{align}\label{eq:Budget}
	\dd X_t &= \pi_{t,1} \frac{\dd S_t}{S_t} + \pi_{t,2} \frac{\dd F_t}{F_t}\\
	&=  \left(\mu_1\, \pi_{t,1}
		  + \left(\mu_2 + \frac{\kap\,Z_t}{T-t+\eps}\right)\pi_{t,2} \right)\dd t\\
	&\qquad + (\pi_{t,1}+\rho\,\pi_{t,2}) \dd W_{t,1}
		   + \pi_{t,2}\sqrt{1-\rho^2} \dd W_{t,2},	
\end{align}
for $0\le t\le T$, and with $X_0=x$.

Next, we define the set of  \emph{admissible trading strategies}.

\begin{definition}\label{def:Admiss}
For constants $x^*, \gam\ge0$, define the set $\Dc\subseteq\Rb$ as follows
\begin{align}
	\Dc :=
	\begin{cases}
		\{x\in\Rb: x>x^*\};&\quad \gam>0,\\
		\Rb;&\quad \gam=0.
	\end{cases}
\end{align}
We denote by $\Ac=\Ac(x^*,\gam)$ the set of all $(\Fc_t)$-adapted processes, denoted by  $\piv=(\pi_{t,1},\pi_{t,2})_{0\le t\le T}$, such that
\begin{enumerate}
	\item[(i)] $\int_0^T \left(\pi_{t,1}^2 + \pi_{t,2}^2 + |\pi_{t,2} Z_t|\right)\dd t < \infty$, $\Pb$-a.s.,
	\item[(ii)] $X_t\in\Dc$ $\Pb$-a.s. for all $t\in[0,T]$, where $(X_t)_{0\le t\le T}$ is given by \eqref{eq:Budget},
	\item[(iii)] $(X_t)_{0\le t\le T}$ is uniformly bounded from below, $\Pb$-a.s. 
\end{enumerate}
\end{definition}

	Note that for $\gam>0$, Condition (ii) becomes $X_t>x^*$ $\Pb$-a.s. for all $t\in[0,T]$, which makes Condition (iii) redundant. For $\gam=0$, Condition (ii) is redundant since the corresponding utility  imposes no constraint on the wealth process. Therefore, for $\gam=0$, Condition (iii) is needed to exclude doubling strategies.



In order to maximize the expected utility of terminal wealth, the investor solves the stochastic control problem
\begin{align}\label{eq:VF}
	V(t,x,z) := \sup_{\piv\in\Ac} \Eb_{t,x,z} \,U\left(X_T\right);\quad (t,x,z)\in[0,T]\times\Dc\times\Rb,
\end{align}
where $\Eb_{t,x,z}(\cdot):=\Eb(\cdot|X_t=x, Z_t=z)$. Here, $U:\Dc\to\Rb$ is a  hyperbolic absolute risk aversion (HARA) utility function whose  \emph{risk tolerance} function admits the  form:
\begin{align}\label{eq:HARA}
	\del(x) := -\frac{U'(x)}{U''(x)}=
	\begin{cases}
		\gam(x-x^*);&\quad \gam>0,\\
		\del_0>0;&\quad \gam=0.
	\end{cases}
\end{align}
We furthermore assume that, for $\gam\ne1$,  
\begin{align}\label{eq:HARA2}
	\frac{U(x)}{U'(x)} = \frac{\del(x)}{\gam-1}=
	\begin{cases}
		\frac{\gam}{\gam-1}(x-x^*);&\quad \gam\in(0,1)\cup(1,+\infty),\\
		-\del_0;&\quad \gam=0.
	\end{cases}
\end{align}

Note that any HARA utility, except the logarithmic utility (i.e. $\gam=1$), can be shifted by a constant such that  \eqref{eq:HARA2} is satisfied. This result is trivial for $\gam=0$ (i.e. exponential utility). The following lemma makes precise  this observation for $\gam\in(0,1)\cup(1,+\infty)$. Hence, condition \eqref{eq:HARA2} results in no loss in generality.
\begin{lemma}
	Let $\Ut:\Dc\to\Rb$ satisfy \eqref{eq:HARA} with $\gam\in(0,1)\cup(1,+\infty)$. Take an arbitrary $x_0\in\Dc$ and define
	\begin{align}\label{eq:HARA2-cond}
		U(x) := \Ut(x)-\Ut(x_0) + \frac{\del(x_0)}{\gam-1}\Ut'(x_0);\quad x\in\Dc.
	\end{align}
	Then, $(\gam-1)U(x)=\del(x)U'(x)$, for $x\in\Dc$.
\end{lemma}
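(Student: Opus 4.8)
The plan is to reduce the claimed identity to a first-order linear ODE in $U$ and integrate it outward from the anchor point $x_0$. The first observation is that, by \eqref{eq:HARA2-cond}, $U$ and $\Ut$ differ only by the additive constant $-\Ut(x_0) + \frac{\del(x_0)}{\gam-1}\Ut'(x_0)$; hence $U'=\Ut'$ and $U''=\Ut''$ throughout $\Dc$. In particular, $U$ inherits the HARA risk-tolerance relation $\del(x) = -U'(x)/U''(x)$ of \eqref{eq:HARA}, which I would immediately rewrite in the cleared form $\del(x)U''(x) = -U'(x)$ so as not to divide by $U''$.

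Next I would introduce the auxiliary function $G(x) := (\gam-1)U(x) - \del(x)U'(x)$ on $\Dc$ and show it vanishes identically. The first step is to evaluate $G$ at the anchor: from \eqref{eq:HARA2-cond} one has $U(x_0) = \frac{\del(x_0)}{\gam-1}\Ut'(x_0)$ and $U'(x_0) = \Ut'(x_0)$, so the two terms in $G(x_0)$ cancel and $G(x_0)=0$. The second step is to differentiate: using $\del'(x) = \gam$ (which follows from $\del(x)=\gam(x-x^*)$) together with the cleared relation $\del(x)U''(x) = -U'(x)$, the derivative collapses to $G'(x) = (\gam-1)U'(x) - \gam U'(x) + U'(x) = 0$.

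Finally, since $\Dc$ is a connected interval (either $\{x:x>x^*\}$ or $\Rb$) and $G$ is continuously differentiable there with $G'\equiv0$ and $G(x_0)=0$, I would conclude $G\equiv0$, which is exactly the asserted identity $(\gam-1)U(x)=\del(x)U'(x)$. There is no genuine obstacle here: the argument is a routine constant-of-integration computation, and the only points requiring (minimal) care are that the additive shift leaves the first two derivatives unchanged, and that the connectedness of $\Dc$ legitimizes the passage from $G'\equiv0$ to $G\equiv0$ (so that a single anchor point $x_0$ suffices).
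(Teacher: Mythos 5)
Your proof is correct and is essentially the paper's argument in differential form: the paper integrates the identity $\frac{\dd}{\dd x}\bigl[\del(x)U'(x)\bigr]=(\gam-1)U'(x)$ by parts from the anchor $x_0$, while you differentiate the auxiliary function $G(x)=(\gam-1)U(x)-\del(x)U'(x)$ and invoke $G(x_0)=0$ together with connectedness of $\Dc$ — the same computation read in the opposite direction. All steps check out, including the observations that the additive shift preserves $U'$ and $U''$ and that $\del'(x)=\gam$ for $\gam>0$.
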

\begin{proof}
	Since $U$ also satisfies \eqref{eq:HARA}, integration by parts yields
	\begin{align}
		U(x)-U(x_0) &= -\int_{x_0}^x \del(y)U''(y)\dd y\\
		&=-\del(x)U'(x) + \del(x_0)U'(x_0)+\gam\big(U(x)-U(x_0)\big).
	\end{align}
	for all $x\in\Dc$. Therefore,
	\begin{align}
		(\gam-1) U(x) =\del(x)U'(x) + (\gam-1)U(x_0) -\del(x_0)U'(x_0),
	\end{align}
	and the result follows since, by \eqref{eq:HARA2-cond}, we have $(\gam-1)U(x_0) =\del(x_0)U'(x_0)$.\qed
\end{proof}

Figure \ref{fig:HARA} illustrates examples of HARA utility functions for different values of $\gam$. Specifically, $\gam\in(0,1)\cup(1,\infty)$ corresponds to power utilities,   $\gam=1$ leads to the logarithmic utility, $\gam=0$ yields the exponential utility, and $\gam=-1$ yields the quadratic utility. Note that we have excluded from our model the case of HARA utility functions with decreasing risk tolerance function, such as quadratic utility.

\begin{figure}[t!]
\centerline{
\adjustbox{trim={0.0\width} {0.04\height} {0.13\width} {0.0\height},clip}
{\includegraphics[scale=0.3, page=1]{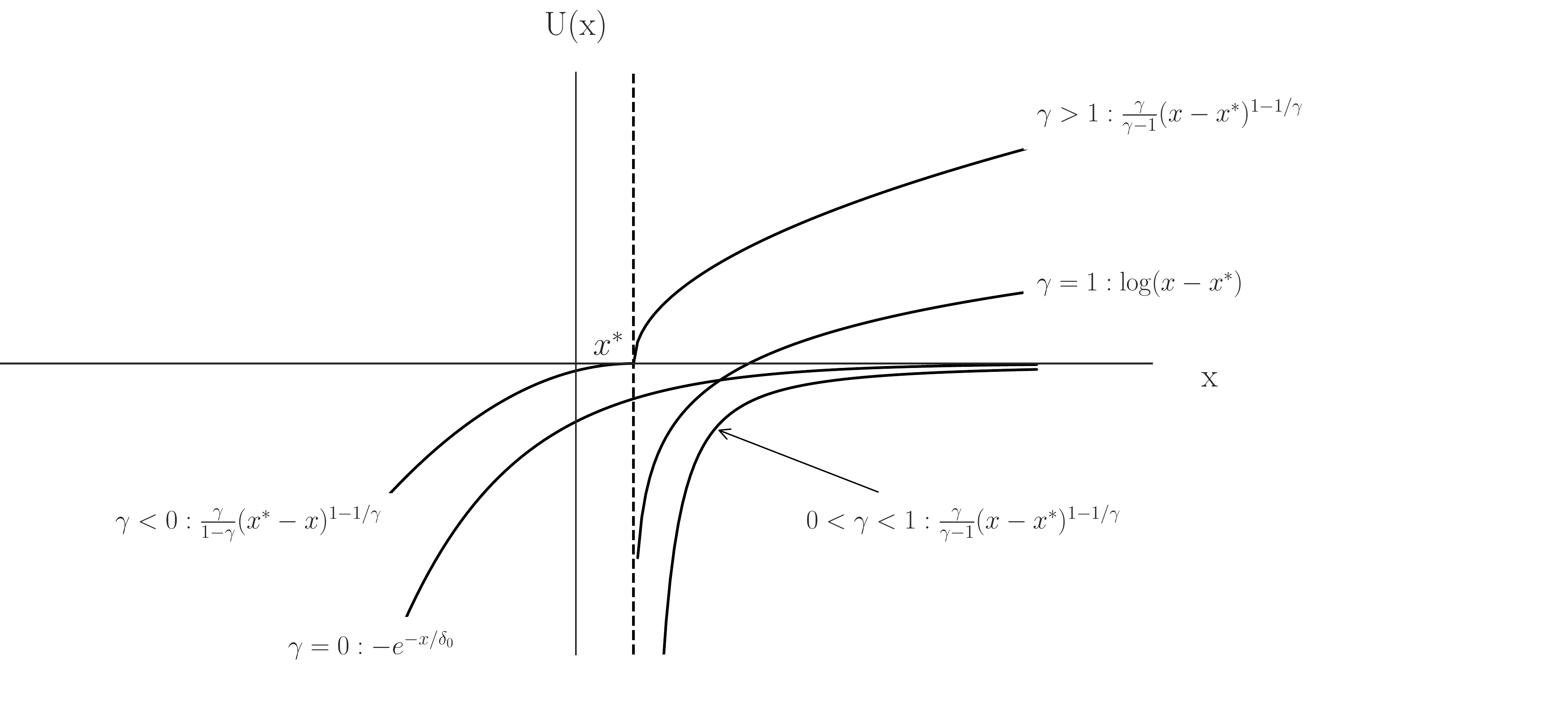}}
}
\caption{HARA utility functions for different values of $\gam$. Specifically, $\gam\in(0,1)\cup(1,\infty)$ yields power utilities, $\gam=1$ yields the logarithmic utility, $\gam=0$ yields the exponential utility, and $\gam=-1$ yields the quadratic utility. Note that we exclude the case of $\gam<0$ from our analysis. 
\label{fig:HARA}}
\end{figure}

\section{HJB equation}\label{sec:HJB}

As we will show in Section \ref{sec:Opt}, the value function $V$ in \eqref{eq:VF} coincides with the classical solution $v(t,x,z)$ of the following terminal value problem on $[0,T]\times \Dc \times \Rb$:
\begin{align}\label{eq:HJB}
\begin{cases}\displaystyle
	\vp_t + \left(\mu_1-\mu_2-\frac{\kap\,z}{T-t+\eps}\right)\, \vp_z 
	+ (1-\rho)\vp_{zz} \displaystyle+\sup_{\piv\in\Rb^2} \Lc_{\piv} \vp = 0,\\
	\vp(T,x,z) = U(x),
\end{cases}
\end{align}
where the differential operator $\Lc_\piv$ is given by
\begin{align}\label{eq:Opr}
\begin{split}
	\Lc_\piv \varphi(x,z) :=
	&\left(\mu_1\, \pi_1 + \left(\mu_2 + \frac{\kap\,z}{T-t+\eps}\right)\pi_2 \right)\,\varphi_x\\
	&+\frac{1}{2}\left(\pi_1^2 + \pi_2^2 + 2\pi_1\pi_2\rho\right)\,\varphi_{xx}
	+(1-\rho)(\pi_1-\pi_2) \,\varphi_{xz},
\end{split}
\end{align}
for any $\piv\in\Rb^2$ and any function $\varphi(t,x,z)$ with continuous derivatives $\varphi_{xx}$ and $\varphi_{xz}$.

\subsection{Well-posedness Conditions}
For the rest of this section, we derive the solution for the nonlinear Hamilton-Jacobi-Bellman (HJB) equation \eqref{eq:HJB}. As it turns out, this equation does not admit  a solution for \emph{all} parameter  values. This leads us to determine and study  the exact conditions under which \eqref{eq:HJB} has a solution. To prepare for our results, we define the constant
\begin{align}\label{eq:gamo}	
	\gamo := \frac{\sqrt{2 \kap^2 + (1 - 2 \kap)(1+\rho)}}{\sqrt{2} \kap (1-\rho )}
	-\frac{1 - 2 \kap +\rho}{2 \kap (1-\rho)}\ge1.
\end{align}
Note that $2 \kap^2 + (1 - 2 \kap)(1+\rho)=(\kap-1)^2+(\kap-\rho)^2+\rho(1-\rho)>0$, and $\gamo\ge1$ since we have assumed that $\kap>0$ and $|\rho|<1$. We now identify  the following cases: 
\vspace{5pt}
\begin{enumerate}
	\item[(i)] For $\gam> \gamo$, \eqref{eq:HJB} is ``\emph{ill-posed}'' in the sense that it has a solution only if $T<T^*(\gam)$, where $T^*(\gam)$ is given below.
	\item[(ii)] For $\gam\in[0,1]$, \eqref{eq:HJB} is ``\emph{well-posed}'' in the sense that it has a unique solution for all values of $T$.
	\item[(iii)] For $\gam\in(1,\gamo]$, \eqref{eq:HJB} is ``\emph{ill-posed}'' (resp. ``\emph{well-posed}'') if $0<\kap<\frac{1}{2}$  (resp. $\kap>\frac{1}{2}$). For $\kap=\frac{1}{2}$, we have $\gamo=1$ and the interval $(1,\gamo]$ is empty. 
\end{enumerate}
These cases are derived from Theorem \ref{thm:HJB} below.
\vspace{5pt}



Achieving infinite expected utility (EU) has been observed in the context of infinite horizon portfolio choice problem (e.g. \cite{Merton1969}), optimal execution (e.g. \cite{bulthuis2016optimal}), and finite horizon optimal trading of assets with mean-reverting returns (e.g. \cite{KimOmberg1996} and \cite{KornKraft2004}). The last two studies, respectively, have coined the terms \mbox{``\emph{nirvana strategies}''} and ``\emph{I-unstable}'' for investment strategies that yield \mbox{infinite}  expected utility over a  finite horizon. 

In reality, however, investors don't achieve infinite EU; otherwise they would drive the prices out of equilibrium. This implies that either: (1) there is no investor with risk tolerance parameter $\gam$ that leads to infinite EU; (2) such an investor exists but the market parameters (that is, $\kap$ and $\rho$) are such that those investors don't have enough time to achieve infinite EU (i.e. $T<T^*(\gam)$ for all agents); or (3) market imperfections (such as transaction cost or parameter ambiguity) prevent investors from achieving infinite EU.

\subsection{Value Function}
Our  solution to  the HJB equation \eqref{eq:HJB} will involve  the solution of the following Riccati equation
\begin{align}
\label{eq:h_ode-1st}
\begin{cases}
	-h'(t) = 2(1-\rho)\gam h^2(t) - 2\frac{\gam\kap}{T+\eps-t}\,h(t) - \frac{(1-\gam)\kap^2}{(1-\rho^2)(T+\eps-t)^2},\\
	h(T)=0.
\end{cases}
\end{align}
There is an  explicit solution to  \eqref{eq:h_ode-1st}, as summarized in Lemma \ref{lem:h}  below.  To prepare for the result, we introduce the following notations. We define
the \emph{``discriminant''} $\Del$ as follows
\begin{align}
	\Del = \Del(\gam) := \frac{\rho-1}{\rho+1}\,\kap^2\,\gam^2 + \left(\frac{2\kap}{1+\rho}-1\right)\kap\gam + \frac{1}{4}.
\end{align}
Furthermore, the \emph{``escape time''} $T^*(\gam)\in(0,+\infty]$ is given by:
\begin{enumerate}
	\item[(i)] If $\gam>\gamo$, then
	\begin{align}
		T^*(\gam) = \eps\left[\exp\left(\frac{\pi}{2\sqrt{-\Del}}-\frac{1}{\sqrt{-\Del}}\arctan\left(\frac{1-2\gam\kap}{2\sqrt{-\Del}}\right)\right) - 1\right].
	\end{align}
	
	\item[(ii)] If $\gam=\gamo$ and $0<\kap<\frac{1}{2}$, then
	\begin{align}
		T^*(\gam) =\eps\left[\exp\left(\frac{2}{1-2\gam\kap}\right) - 1\right].
	\end{align}
	
	\item[(iii)] If $\gam\in(1, \gamo)$ and $0<\kap<\frac{1}{2}$, then
	\begin{align}
		T^*(\gam) =\eps\left[\left(\frac{0.5 - \gam\,\kap + \sqrt{\Del}}{0.5 - \gam\,\kap - \sqrt{\Del}}\right)^{\frac{1}{2\sqrt{\Del}}} - 1\right].
	\end{align}
	
	\item[(iv)] $T^*(\gam)=+\infty$ for all other values of $\gam$.
\end{enumerate}
The escape time plays a critical role in the Riccati equation \eqref{eq:h_ode-1st}.




\begin{lemma}\label{lem:h}
	The Riccati equation \eqref{eq:h_ode-1st} has a solution only if $T<T^*(\gam)$. In particular, if $\gam>\gamo$, then $\Del(\gam)<0$ and the solution    is  
	\begin{align}
		h(t) = \frac{
				\sqrt{-\Del}
				\tan\left[\sqrt{-\Del}\log\left(1+\frac{T-t}{\eps}\right)+\arctan\left(\frac{1-2\gam\kap}{2\sqrt{-\Del}}\right)\right]
				+ \gam\kap - \frac{1}{2}
				}
				{2\gam(1-\rho)(T-t+\eps)}.
	\end{align}
	If $0\le\gam\le\gamo$, then $\Del(\gam) \ge0$ and the solution is	
	\begin{align}
		h(t) =
		\begin{cases}\displaystyle
			-\frac{\kap^2}{1-\rho^2}\left(\frac{1}{\eps} - \frac{1}{T-t+\eps}\right);
			&\quad \text{if }\gam=0\vspace{1em},
			\\
			\displaystyle
			\left(\frac{\log\left(1+\frac{T-t}{\eps}\right)}{1-(\frac{1}{2} - \gam\kap)\log\left(1+\frac{T-t}{\eps}\right)}\right)\\
			\qquad\qquad\qquad\times\left(\frac{(\frac{1}{2} - \gam\kap)^2}{2\gam(1-\rho)(T-t+\eps)}\right);
			&\quad \text{if }\gam=\gamo\vspace{1em},
			\\
			\left(\frac{\left(1+\frac{T-t}{\eps}\right)^{2\sqrt{\Del}} -1}
			{\left(\sqrt{\Del}-\frac{1}{2} + \gam\kap\right)\left(1+\frac{T-t}{\eps}\right)^{2\sqrt{\Del}}
			+\sqrt{\Del}+\frac{1}{2} - \gam\kap}\right)\\
			\qquad\qquad\qquad\qquad\qquad\times
			\frac{\kap^2(\gam-1)}{(1-\rho^2)(T-t+\eps)};
			&\quad \text{if }0<\gam<\gamo.
		\end{cases}
	\end{align}
	If $T^*(\gam)<+\infty$, then $\lim_{T\to T^*(\gam)}\,|h(0;T)| = +\infty$.  \vspace{1ex}
\end{lemma}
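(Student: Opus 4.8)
The plan is to reduce the non-autonomous Riccati equation \eqref{eq:h_ode-1st} to an autonomous one by exploiting its Euler-type coefficient structure. Writing $\tau := T - t + \eps$ and introducing the logarithmic variable $s := \log \tau$, I would substitute $h(t) = \phi(s)/\tau$. A direct computation turns \eqref{eq:h_ode-1st} into the constant-coefficient Riccati equation $\phi'(s) = A\phi^2 + B\phi + C$ with $A = 2(1-\rho)\gam$, $B = 1 - 2\gam\kap$, and $C = -\tfrac{(1-\gam)\kap^2}{1-\rho^2}$, subject to $\phi(\log\eps) = 0$ (since $h(T)=0$ and $\tau = \eps$ at $t=T$). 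The terminal condition thus becomes an initial condition for the autonomous problem, and everything reduces to solving $\phi' = A\phi^2 + B\phi + C$.

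The qualitative behavior is governed by the sign of the discriminant of the quadratic $A\phi^2 + B\phi + C$. A short calculation shows $B^2 - 4AC = 4\Del(\gam)$, tying the paper's discriminant $\Del$ directly to the autonomous Riccati. Since $\Del(\gam)$ is a downward parabola in $\gam$ with $\Del(0) = \tfrac14 > 0$ and larger root $\gamo \ge 1$, I would record that $\Del > 0$ for $0 \le \gam < \gamo$, $\Del = 0$ for $\gam = \gamo$, and $\Del < 0$ for $\gam > \gamo$. I then solve the autonomous equation case by case: for $\gam = 0$ the quadratic term vanishes ($A = 0$) and I integrate directly; for $\Del > 0$ (distinct real roots) I use the factorization $\phi' = A(\phi - \phi_+)(\phi - \phi_-)$ and separate variables, producing $e^{2\sqrt{\Del}\,s} = (\tau/\eps)^{2\sqrt{\Del}} = (1 + \tfrac{T-t}{\eps})^{2\sqrt{\Del}}$; for $\Del = 0$ (double root) the solution has the form $\phi_0 - 1/(A(s - s_0))$, yielding the logarithmic expression; and for $\Del < 0$ (complex roots) completing the square gives $\psi' = A\psi^2 - \Del/A$ with $\psi = \phi + B/(2A)$, whose solution is a tangent with argument $\sqrt{-\Del}\,s + \text{const}$. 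In each case the integration constant is fixed by $\phi(\log\eps)=0$; for the tangent case this produces exactly the shift $\arctan\!\big(\tfrac{1-2\gam\kap}{2\sqrt{-\Del}}\big)$, and undoing $h = \phi/\tau$ matches the stated closed forms (one checks that $-\tfrac{B}{2A} = \tfrac{\gam\kap - 1/2}{2\gam(1-\rho)}$ reproduces the constant term in case (i)).

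For the escape-time and ``only if'' claims, I would track the first value of $\tau > \eps$ at which the autonomous solution blows up and translate it back through $\tau = T - t + \eps$. In the tangent case the blow-up occurs when the argument first reaches $\pi/2$; setting $\sqrt{-\Del}\log(1 + \tfrac{T-t}{\eps}) + \arctan(\cdots) = \pi/2$ and reading off the value at $t=0$ gives $T^*(\gam)$ as in case (i). Analogously, in the $\Del = 0$ and $0 < \gam < \gamo$ cases the blow-up is located where the respective denominators vanish, giving the formulas in (ii) and (iii). Because the solution is constructed backward from $t=T$ (where $\tau = \eps$) toward $t=0$ (where $\tau = T+\eps$), it stays finite on all of $[0,T]$ precisely when the blow-up value of $\tau$ exceeds $T + \eps$, i.e. precisely when $T < T^*(\gam)$; this yields both the ``only if'' statement and, letting $T \uparrow T^*(\gam)$, the divergence $|h(0;T)| \to +\infty$.

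The main obstacle I anticipate is not the integration but the sign bookkeeping that decides when $T^*(\gam)$ is finite versus $+\infty$ --- in particular the dichotomy at $\kap = \tfrac12$ in cases (ii) and (iii). This requires analyzing the signs of $\tfrac12 - \gam\kap$ and of $\tfrac12 - \gam\kap \pm \sqrt{\Del}$, and verifying that a blow-up value of $\tau$ strictly greater than $\eps$ exists only when $0 < \kap < \tfrac12$; for $\kap \ge \tfrac12$ the tangent argument (resp. the denominators) never reaches its critical value, so $T^*(\gam) = +\infty$. Care is also needed at $\gam = \gamo$, where $\kap = \tfrac12$ forces $\gamo = 1$ and collapses the interval $(1,\gamo]$, consistent with the statement.
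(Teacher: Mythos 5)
Your derivation is correct and all the formulas check out: the substitution $h(t)=\phi(\log\tau)/\tau$ with $\tau=T-t+\eps$ does turn \eqref{eq:h_ode-1st} into $\phi'=A\phi^2+B\phi+C$ with exactly your $A$, $B$, $C$ (the extra $-\phi$ from differentiating $1/\tau$ is what produces the ``$+1$'' in $B=1-2\gam\kap$), the identity $B^2-4AC=4\Del$ holds, and unwinding the four standard autonomous cases (linear for $\gam=0$, two real roots, double root, complex roots) with initial condition $\phi(\log\eps)=0$ reproduces the stated closed forms and the escape times $T^*(\gam)$, including the constant shift $\arctan\big(\tfrac{1-2\gam\kap}{2\sqrt{-\Del}}\big)$ and the prefactor $\tfrac{\gam\kap-1/2}{2\gam(1-\rho)}$. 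The paper itself gives no argument beyond ``direct substitution,'' i.e.\ plugging the stated $h$ into the ODE and checking; your route is a genuine derivation rather than a verification, and it buys two things: it explains where the discriminant $\Del$ and the escape times come from (blow-up of $\tan$ at $\pi/2$, vanishing of the double-root and two-root denominators, translated back through $\tau=T-t+\eps$), and it makes the sign bookkeeping for the $\kap\gtrless\tfrac12$ dichotomy a concrete exercise about whether the critical value of $\tau$ exceeds $\eps$. One small point you should make explicit for the ``only if'' clause: the right-hand side of the Riccati equation is locally Lipschitz in $h$ on $[0,T]$ (the coefficients are continuous since $\eps>0$), so the backward solution from $h(T)=0$ is unique; hence if your explicit formula blows up inside $[0,T]$ (i.e.\ $T\ge T^*(\gam)$), no solution of \eqref{eq:h_ode-1st} on all of $[0,T]$ can exist. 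Without that uniqueness remark, the blow-up of one particular solution would not rule out the existence of another.
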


The proof is by direct substitution, and thus  omitted.

\begin{remark}
	Note that as $\eps\to 0^+$, we have $T^*(\gam)\to 0^+$ for cases (i)-(iii). This is consistent with the scenario where there is arbitrage in the limit $\eps\to 0^+$. 
\end{remark}

Now, we present the value function in explicit  form.

\begin{theorem}\label{thm:HJB}
	Assume   $T<T^*(\gam)$. The solution of the HJB equation \eqref{eq:HJB} is
	\begin{align}\label{eq:vp}
		&\vp(t,x,z) = U(x)\,\exp\left(f(t)+g(t)\,z+\frac{1}{2}h(t)\,z^2\right), 
	\end{align} for $(t,x,z)\in [0,T]\times\Dc\times\Rb$,  where $h(t)$ is the solution of \eqref{eq:h_ode-1st} given by Lemma \ref{lem:h}, and $g(t)$ and $f(t)$ are given by
	\begin{align}
	\begin{cases}                      
		g'(t) +\gam\left(2(1-\rho)h(t) - \frac{\kap}{T+\eps-t}\right)g(t)\\
		\qquad= \frac{\kap(1-\gam)\big(\mu_2-\rho \mu_1\big)}{(1-\rho^2)(T+\eps-t)} - \gam(\mu_1-\mu_2)h(t)\vspace{1em},\\
		g(T)=0\vspace{1em},
	\end{cases}\label{eq:g}
	\end{align}
	and
	\begin{align}
		f(t) = &\int_t^T (1-\rho)\gam g^2(u) + \gam(\mu_1-\mu_2) g(u) + (1-\rho) h(u) \dd u\\
		&- \frac{(1-\gam)(\mu_1^2 + \mu_2^2 - 2\rho\mu_2\mu_1)(T-t)}{2(1-\rho^2)}.
		\label{eq:f}
	\end{align} 
\end{theorem}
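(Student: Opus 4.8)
The plan is to treat the theorem as a \emph{verification} result. Assuming $\gam\neq1$ (the logarithmic case is commented on at the end), I would perform the pointwise maximization in \eqref{eq:HJB}, substitute the separable ansatz \eqref{eq:vp}, and show that the HARA structure collapses the resulting PDE into exactly the three ordinary differential equations satisfied by $h$, $g$, $f$. Since $h$ is furnished by Lemma \ref{lem:h} (which needs $T<T^*(\gam)$), and $g$, $f$ then follow from a linear ODE and a quadrature, this establishes that $\vp$ in \eqref{eq:vp} solves \eqref{eq:HJB}.

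First I would carry out the inner supremum. For fixed $(t,x,z)$, the map $\piv\mapsto\Lc_\piv\vp$ is quadratic in $\piv$, with quadratic part $\tfrac12\vp_{xx}\,\piv^\top\Sig\Sig^\top\piv$ and linear part $b_1\pi_1+b_2\pi_2$, where $b_1=\mu_1\vp_x+(1-\rho)\vp_{xz}$ and $b_2=\big(\mu_2+\tfrac{\kap z}{T-t+\eps}\big)\vp_x-(1-\rho)\vp_{xz}$. Since $\Sig\Sig^\top=\left(\begin{smallmatrix}1&\rho\\\rho&1\end{smallmatrix}\right)\succ0$ for $|\rho|<1$ and $\vp_{xx}<0$ (verified below), the Hessian is negative definite, the maximizer is $\piv^*=-\vp_{xx}^{-1}(\Sig\Sig^\top)^{-1}(b_1,b_2)^\top$, and $\sup_\piv\Lc_\piv\vp=-\tfrac{1}{2\vp_{xx}}(b_1,b_2)(\Sig\Sig^\top)^{-1}(b_1,b_2)^\top$, with $(\Sig\Sig^\top)^{-1}=\tfrac{1}{1-\rho^2}\left(\begin{smallmatrix}1&-\rho\\-\rho&1\end{smallmatrix}\right)$. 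This turns \eqref{eq:HJB} into a fully nonlinear PDE in $\vp$ alone.

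Next I would insert the ansatz $\vp=U(x)\,\Phi$ with $\Phi(t,z)=\exp\!\big(f+gz+\tfrac12 hz^2\big)>0$, so that $x$-derivatives act only on $U$ and $z$-derivatives only on $\Phi$: $\vp_x=U'\Phi$, $\vp_{xx}=U''\Phi$ (confirming $\vp_{xx}<0$ since $U''<0$ and $\Phi>0$), $\vp_{xz}=U'\Phi(g+hz)$, $\vp_z=U\Phi(g+hz)$, $\vp_{zz}=U\Phi\big[(g+hz)^2+h\big]$, and $\vp_t=U\Phi(f'+g'z+\tfrac12 h'z^2)$. In the maximized term $b_1,b_2$ each carry the factor $U'\Phi$ while $\vp_{xx}$ carries $U''\Phi$, so the combination $(U')^2/U''=-\del(x)U'$ appears; the HARA identity $(\gam-1)U=\del(x)U'$ from \eqref{eq:HARA2} converts it into $(\gam-1)U$. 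Hence, after cancelling $\Phi$, \emph{every} surviving term is proportional to $U(x)$, and dividing by $U(x)$ (which has constant, nonzero sign under \eqref{eq:HARA2}) eliminates the $x$-dependence entirely. What remains is a degree-two polynomial identity in $z$ with $t$-dependent coefficients: matching the coefficients of $z^2$, $z^1$, $z^0$ reproduces the Riccati equation \eqref{eq:h_ode-1st} for $h$, the linear equation \eqref{eq:g} for $g$, and (after integrating from $t$ to $T$) the quadrature \eqref{eq:f} for $f$; the terminal condition $\vp(T,\cdot,\cdot)=U$ is equivalent to $f(T)=g(T)=h(T)=0$, which holds by construction.

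The main obstacle is bookkeeping rather than conceptual: expanding $(b_1,b_2)(\Sig\Sig^\top)^{-1}(b_1,b_2)^\top$ and regrouping the terms in $G:=g+hz$ with the drift and diffusion contributions so that the $\gam$ and $(\gam-1)$ pieces combine into the clean coefficients of \eqref{eq:h_ode-1st}--\eqref{eq:f}. The one genuinely delicate point is that the reduction hinges on $(\gam-1)U=\del U'$, available only for $\gam\neq1$; for the logarithmic case $\gam=1$ the quotient $U/U'$ is not constant, so the multiplicative ansatz is inappropriate and one instead uses the additive form $\vp=U(x)+f(t)+g(t)z+\tfrac12 h(t)z^2$, which \emph{linearizes} the maximized PDE and yields the corresponding (degenerate) equations for $f,g,h$. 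I would also record that the negative-definiteness used in the maximization requires $|\rho|<1$ together with concavity $U''<0$, so the candidate $\piv^*$ is a genuine maximizer and not merely a critical point.
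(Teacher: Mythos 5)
Your proposal is correct and follows essentially the same route as the paper's proof in Appendix B: pointwise maximization of the quadratic in $\piv$ (the paper completes the square with $\Sig^\top\piv$, you use the equivalent formula $-\tfrac{1}{2\vp_{xx}}b^\top(\Sig\Sig^\top)^{-1}b$), substitution of the multiplicative ansatz, reduction via the HARA identity $(U')^2/U''=(1-\gam)U$, and matching the coefficients of $z^2$, $z$, $z^0$ to obtain \eqref{eq:h_ode-1st}, \eqref{eq:g}, and \eqref{eq:f}. Your aside on the logarithmic case is harmless but outside the paper's scope, since $\gam=1$ is explicitly excluded from the analysis.
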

\begin{proof}
	See Appendix \ref{app:HJB}.
\end{proof}

Next, we verify that the value function coincides with the solution of the HJB equation \eqref{eq:HJB} from Theorem \ref{thm:HJB}. We also identify the optimal trading strategy.

\begin{theorem}\label{thm:VF}
Assume $T<T^*(\gam)$.  The value function $V$ in \eqref{eq:VF}  is equal to the function  $\vp$ given in Theorem \ref{thm:HJB}. Furthermore, the optimal trading strategy, denoted by $\piv^*(t,X^*_t,Z_t)$, is in   feedback form, where 
\begin{align}
	\piv^*(t,x,z) &= 
	\begin{pmatrix}
		\pi^*_1(t,x,z)\\
		\pi^*_2(t,x,z)
	\end{pmatrix}\\\label{eq:pis}
	&=\del(x) \left[
		\begin{pmatrix}
			\frac{\mu_1-\rho\mu_2}{1-\rho^2} + g(t)\\
			\frac{\mu_2-\rho\mu_1}{1-\rho^2} - g(t)
		\end{pmatrix}
		+
		z\,
		\begin{pmatrix}
			h(t) - \frac{\kap\,\rho}{1-\rho^2}\\
			-h(t) + \frac{\kap}{1-\rho^2}
		\end{pmatrix}
	\right],
\end{align}
for $(t,x,z) \in [0,T]\times\Dc\times\Rb.$ Here, $\del(x)$ is the risk tolerance function defined in \eqref{eq:HARA}. 
\end{theorem}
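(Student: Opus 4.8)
The plan is to prove Theorem \ref{thm:VF} by a standard verification argument anchored on the smooth candidate $\vp$ produced in Theorem \ref{thm:HJB}. The first step is to identify the pointwise maximizer of the Hamiltonian in \eqref{eq:HJB}. Because $\vp = U(x)\exp\!\big(f(t)+g(t)z+\tfrac12 h(t)z^2\big)$ and $U''<0$, the map $\piv\mapsto\Lc_\piv\vp$ in \eqref{eq:Opr} is a strictly concave quadratic, so its maximizer solves the first-order system $\Sig\Sig^\top\piv = -\tfrac{\vp_x}{\vp_{xx}}\,\mathbf b - (1-\rho)\tfrac{\vp_{xz}}{\vp_{xx}}(1,-1)^\top$, where $\mathbf b=(\mu_1,\,\mu_2+\kap z/(T-t+\eps))^\top$ and $\Sig\Sig^\top=\begin{pmatrix}1&\rho\\\rho&1\end{pmatrix}$. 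Using the HARA identities $\vp_x/\vp_{xx}=-\del(x)$ and $\vp_{xz}/\vp_{xx}=-\del(x)(g(t)+h(t)z)$ from \eqref{eq:HARA}, together with the simplifying identity $(1-\rho)(\Sig\Sig^\top)^{-1}(1,-1)^\top=(1,-1)^\top$ (which explains the clean $\pm(g+hz)\del(x)$ structure), inverting the $2\times2$ matrix yields the feedback control $\piv^*$ of \eqref{eq:pis}. This is a direct computation.

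Next I would verify $\piv^*\in\Ac$. Substituting $\piv^*$ into the wealth dynamics \eqref{eq:Budget} gives the closed-loop equation for $X^*$. For $\gam>0$ the positions are proportional to $\del(X^*_t)=\gam(X^*_t-x^*)$, so the shifted wealth $X^*_t-x^*$ obeys a linear SDE whose coefficients are bounded on $[0,T]$ — here $T<T^*(\gam)$ keeps $g,h$ bounded by Lemma \ref{lem:h}, and $T-t+\eps\ge\eps>0$ avoids any singularity — hence $X^*_t-x^*$ remains strictly positive, giving condition (ii) of Definition \ref{def:Admiss}; condition (i) then follows from the Gaussian moment bounds of Corollary \ref{cor:Gaussian}. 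For $\gam=0$ the positions are deterministic affine functions of $Z_t$, and conditions (i) and (iii) are checked directly from the same moment estimates.

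The core is the (super)martingale step. Applying It\^o's formula to $\vp(t,X_t,Z_t)$ along an arbitrary admissible $\piv$, the finite-variation part is exactly the left-hand side of \eqref{eq:HJB} at that $\piv$, namely $\vp_t+(\mu_1-\mu_2-\kap z/(T-t+\eps))\vp_z+(1-\rho)\vp_{zz}+\Lc_\piv\vp$, which is $\le0$ by the HJB equation and $=0$ for $\piv=\piv^*$. Hence $\vp(t,X_t,Z_t)$ is a local supermartingale (a local martingale under $\piv^*$). Invoking the lower bound on $X$ furnished by admissibility together with the sign/growth of $U$, a localization-plus-Fatou argument upgrades this to a genuine supermartingale, yielding $\Eb_{t,x,z}U(X_T)\le\vp(t,x,z)$ and, after taking the supremum, $V\le\vp$. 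For the optimizer I would show the local martingale under $\piv^*$ is uniformly integrable, so that $\Eb_{t,x,z}U(X^*_T)=\vp(t,x,z)$, which closes the inequality and simultaneously establishes optimality of $\piv^*$.

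The main obstacle is precisely this final integrability step. Since $\vp$ is exponential-quadratic in the Gaussian variable $Z$, the required uniform integrability reduces to finiteness of Gaussian moments of the type $\Eb\exp\!\big(\tfrac12 h(t)Z_t^2+\cdots\big)$, and this is exactly where $T<T^*(\gam)$ is indispensable: as $T\uparrow T^*(\gam)$ the Riccati solution $h$ blows up (Lemma \ref{lem:h}) and these moments cease to be finite. The other delicate points are handling the borderline admissibility condition (iii) when $\gam=0$, and making the Fatou upgrade uniform across the different sign regimes of $U$ (positive for $\gam>1$, negative for $\gam\in(0,1)$, logarithmic at $\gam=1$, exponential at $\gam=0$).
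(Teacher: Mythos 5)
Your overall architecture is the same as the paper's: identify the pointwise maximizer of the Hamiltonian (your first-order computation, including the identity $(1-\rho)(\Sig\Sig^\top)^{-1}(1,-1)^\top=(1,-1)^\top$ and the HARA relations $\vp_x/\vp_{xx}=-\del(x)$, $\vp_{xz}/\vp_{xx}=-\del(x)(g+hz)$, is exactly what the paper does and is correct), check admissibility of $\piv^*$ via the stochastic-exponential form of $X^*_t-x^*$ and $\int_0^T Z_t^2\,\dd t<\infty$, and then run a localize-and-Fatou supermartingale argument for the upper bound. Two remarks on structure: the paper first splits into $0\le\gam<1$ and $\gam>1$, disposing of the former by citing a standard verification theorem (Lemma \ref{lem:sign} gives $h<0$ there, so $\vp$ is bounded in $z$ and has polynomial growth in $x$); only $\gam>1$ needs the hand-built argument. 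Also, $\gam=1$ is excluded from the paper's analysis, so your worry about the "logarithmic regime" is moot.

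The genuine gap is at the step you yourself flag as the main obstacle: proving that $\big(\vp(t,X^*_t,Z_t)\big)$ is a true martingale (equivalently, that the localized identity passes to the limit with equality). Your plan reduces this to finiteness of Gaussian exponential moments $\Eb\exp\big(\tfrac12 h(t)Z_t^2+\cdots\big)$ and asserts that $T<T^*(\gam)$ is what makes them finite. But $T<T^*(\gam)$ only guarantees that $h$ is finite on $[0,T]$; finiteness of $\Eb\exp\big(\tfrac12 h(t)Z_t^2\big)$ additionally requires $h(t)\,\sig(t,t)<1$ for all $t$, which you neither state nor prove, and even pointwise finiteness of these moments would not by itself give uniform integrability of the stopped family $\{\vp(\tau_n,X^*_{\tau_n},Z_{\tau_n})\}_n$ without a further domination or $L^{1+\delta}$ bound. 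The paper avoids this entirely: it computes $\dd Y_t = Y_t\big(\alvt(t)+\betavt(t)Z_t\big)\dd\Wv_t$ for $Y_t:=\vp(t,X^*_t,Z_t)$, observes that the integrand is affine in $Z_t$ with bounded coefficients, and then reuses the deterministic time change and Knight's theorem from the proof of Proposition \ref{prop:NA}, so that martingality follows from the linear-growth criterion for exponential local martingales (Corollary 3.5.16 of Karatzas and Shreve). If you want to keep your Gaussian-moment route you must supply the inequality $h(t)\sig(t,t)<1$ on $[0,T]$ and a genuine uniform-integrability bound; otherwise the time-change argument is the cleaner way to close the proof.
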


\begin{proof}
	See Appendix \ref{app:VF}.
\end{proof}

\begin{figure}[tb]

\centerline{
\adjustbox{trim={0.0\width} {0.0\height} {0.07\width} {0.2\height},clip}
{\includegraphics[scale=0.25, page=1]{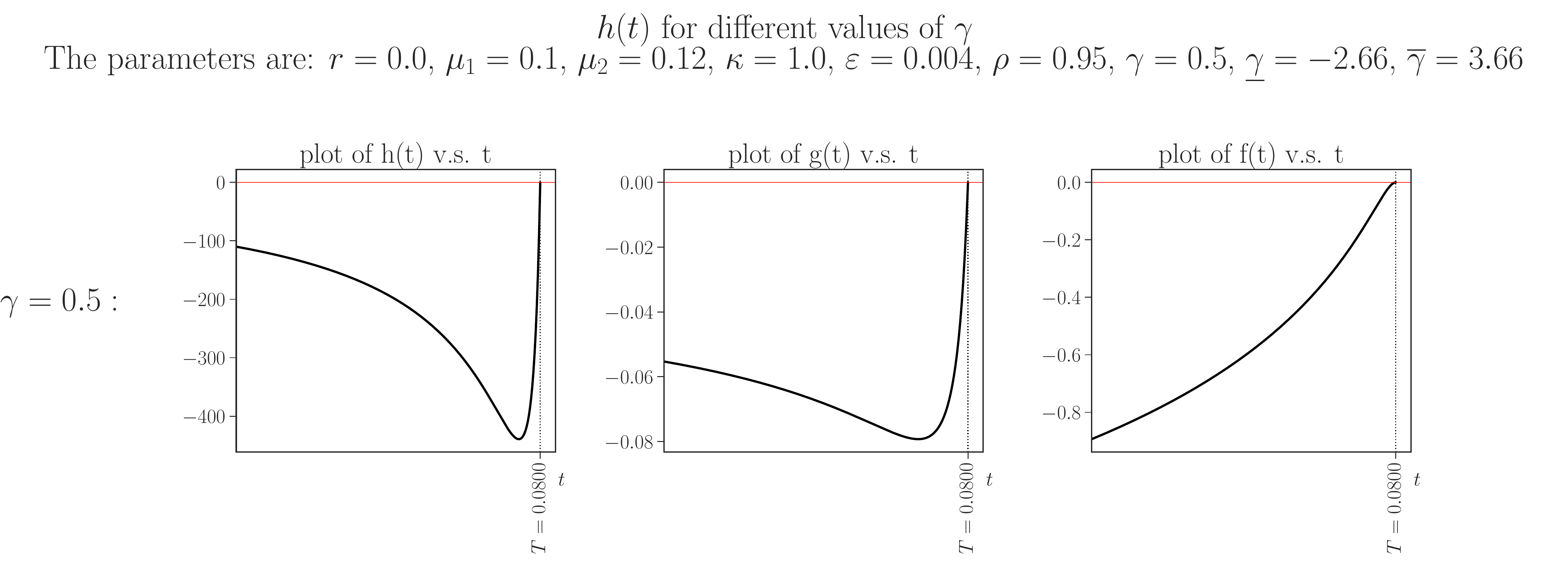}}
}

\centerline{
\adjustbox{trim={0.0\width} {0.0\height} {0.07\width} {0.2\height},clip}
{\includegraphics[scale=0.25, page=1]{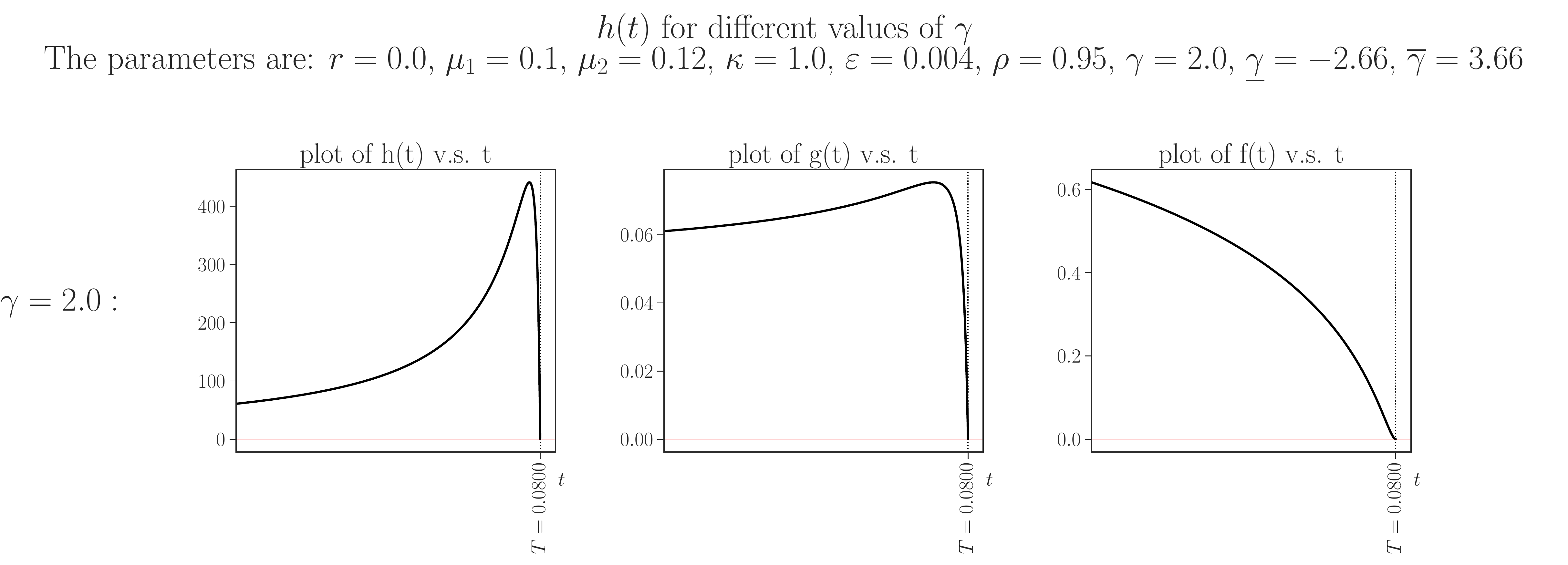}}
}

\centerline{
\adjustbox{trim={0.0\width} {0.0\height} {0.07\width} {0.2\height},clip}
{\includegraphics[scale=0.25, page=1]{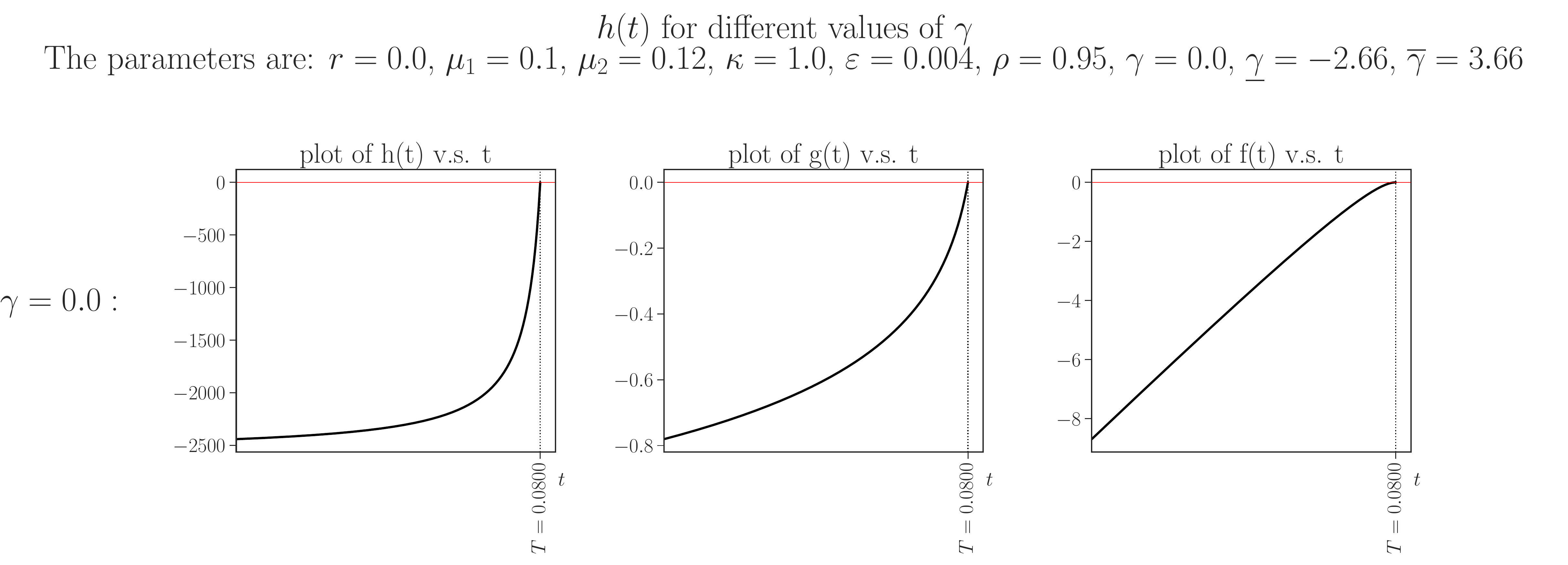}}
}
\caption{Plots of $h(t)$, $g(t)$, and $f(t)$ for various values of $\gam$ in the well-posed range. The model parameters are $\mu_1=0.1$, $\mu_2=0.12$, $\kap=1$, $\eps=0.004$, $\rho=0.95$, and $T=0.08$. These parameter yield the critical value of $\gamo=3.66$.}
\label{fig:wellposed}
\end{figure}

Figures \ref{fig:wellposed} illustrates the behaviors of   functions $f$, $g$, and $h$ under  three well-posed scenarios. We have set $\kap=1$ and $\rho=0.95$  such that the critical risk tolerance value is $\gamo=3.66$. Each row of the figure shows the functions for different value of $\gam$, namely $0$, $0.5$, and $2$. Note that since $\kap=1>0.5$, the values of $\gam$ in the interval $(1,\gamo=3.66)$ are well-posed.   In the exponential case (i.e. $\gam=0$), $h$ monotonically increases to zero. For the power cases (i.e. $\gam\in\{0.5, 2\}$), $h$ is not monotone and   $|h|$ reaches a maximum of high value before reaching zero. For large values of $T-t$ (i.e. the left end on the x-axis), $h(t)$ seems to flatten. In terms of behaviors, $g$ appears  to have similar properties to  $h$, although the values and variations of $g$ is significantly less. In all three scenarios, $f$ moves monotonically to zero over time.

\begin{figure}[ht]
\centerline{
\adjustbox{trim={0.06\width} {0.0\height} {0.06\width} {0.2\height},clip}
{\includegraphics[scale=0.28, page=1]{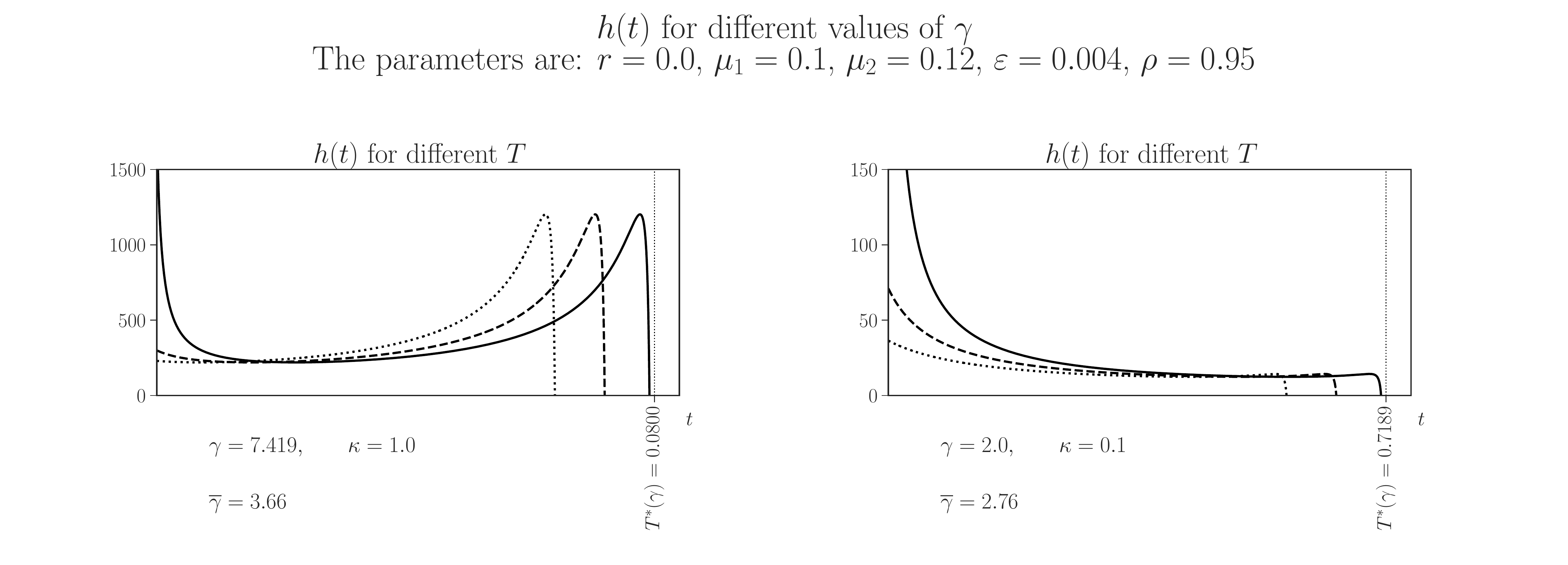}}
}
\caption{Plots of $h(t)$ for two ill-posed scenarios. In each scenario, three values of $T$ are, respectively,  $80\%$ (dotted curve), $95\%$ (dashed curve), and $99\%$ (solid curve) of the escape time $T^*(\gamma)$. The chosen values of $\gam$, $\kap$, and $\gamo$ are stated under each plot. The remaining parameters have the same values as in Figure \ref{fig:wellposed}.}
\label{fig:illposed}
\end{figure}

In Figure \ref{fig:illposed}, we consider  two ill-posed scenarios. Specifically, we plot $h(t)$ over time for  $\gam>\gamo$ (the left panel)
and $1<\gam<\gamo$ and $0<\kap<0.5$ (the right panel). In each plot, $h(t)$ is plotted for three choices of $T$, which are, respectively, $80\%$ (dotted curve), $95\%$ (dashed curve), and $99\%$ (solid curve) of the escape time $T^*(\gamma)$. As expected for these ill-posed cases, we see that  $\lim_{T\to T^*(\gam)^-} |h(0)|=+\infty$, leading the expected utility to reach $+\infty$ over a finite investment horizon.

From Figures \ref{fig:wellposed} and \ref{fig:illposed}, we observe  that $h$ is positive for $\gam>1$, and negative for $0\le\gam<1$.\footnote{Recall that $\gam=1$ corresponds to the logarithmic utility which we have excluded from our analysis.} This motivates us to end the section by  showing it analytically. The proof is provided in Appendix. 

\begin{lemma}\label{lem:sign}
	Consider $h(t)$, the solution of \eqref{eq:h_ode-1st}, given by Lemma \ref{lem:h}. If $\gam<1$ (resp. $\gam>1$), then $h(t)<0$ (resp. $h(t)>0$), for all $t\in[0,T)$. 
\end{lemma}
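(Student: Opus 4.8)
The plan is to sidestep the explicit case-by-case formulas of Lemma~\ref{lem:h} and instead read the sign of $h$ directly off the Riccati equation~\eqref{eq:h_ode-1st}, exploiting one structural fact: at any point $t_0$ where $h(t_0)=0$, both the quadratic term $2(1-\rho)\gam h^2$ and the linear term $-2\gam\kap h/(T+\eps-t)$ vanish, so that
\[
	h'(t_0) = \frac{(1-\gam)\kap^2}{(1-\rho^2)(T+\eps-t_0)^2}.
\]
Since $\kap>0$ and $|\rho|<1$, the sign of $h'$ at \emph{every} zero of $h$ is governed solely by the sign of $1-\gam$: it is strictly positive when $\gam<1$ and strictly negative when $\gam>1$. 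Applying this at the terminal point $t_0=T$, where $h(T)=0$ by the terminal condition, shows that for $\gam<1$ the function $h$ is strictly increasing at $T$, hence negative just to the left of $T$; for $\gam>1$ it is strictly decreasing at $T$, hence positive just to the left of $T$. This fixes the claimed sign on some interval $(T-\eta,T)$ with $\eta>0$.

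First I would treat the case $\gam<1$ and propagate the sign to all of $[0,T)$ by excluding any interior sign change. Suppose, for contradiction, that $h$ vanished somewhere in $[0,T)$. Because $h$ is continuous on $[0,T]$ and nonzero on $(T-\eta,T)$, its zeros in $[0,T)$ all lie in the compact set $[0,T-\eta]$, so there is a \emph{largest} zero $t_0<T$. By maximality $h$ keeps a constant sign on $(t_0,T)$, and that sign is negative because $h<0$ near $T$. But $h(t_0)=0$ together with $h'(t_0)>0$ forces $h(t)>0$ for $t$ slightly above $t_0$, contradicting $h<0$ on $(t_0,T)$. Hence $h$ has no zero in $[0,T)$; being negative near $T$ and unable to change sign without vanishing, it is negative throughout $[0,T)$. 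The case $\gam>1$ runs identically with all inequalities reversed, using $h'(t_0)<0$ at any zero.

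I do not expect a deep obstacle here; the only care needed is bookkeeping on existence and regularity. I would first invoke Lemma~\ref{lem:h} under the standing hypothesis $T<T^*(\gam)$ to guarantee that $h$ is finite and $C^1$ on all of $[0,T]$, which legitimizes evaluating $h'$ at a zero and makes the zero set closed. The one point deserving a line of justification is the existence of a largest zero in $[0,T)$, and this is exactly what the near-maturity analysis supplies: knowing $h\neq0$ on a left-neighborhood of $T$ isolates the zero set away from $T$ and renders it compact. Everything else is a routine continuity argument, so the ``hard part'' is really just ensuring these regularity hypotheses are cleanly in place before running the contradiction.
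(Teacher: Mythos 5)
Your proposal is correct and follows essentially the same route as the paper's proof: both read the sign of $h'$ at any zero of $h$ directly from the Riccati equation \eqref{eq:h_ode-1st}, use the terminal condition $h(T)=0$ to pin down the sign of $h$ on a left neighborhood of $T$, and then rule out interior zeros by contradiction. The only cosmetic difference is that you extract a \emph{largest} zero from the compact zero set, whereas the paper locates an intermediate crossing point $\tau$ with $h(\tau)=0$ and $h'(\tau)\ge 0$; both contradictions rest on the identical key fact.
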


%
%
\section{Optimal basis trading strategy}\label{sec:Opt}
Let us   examine the optimal trading strategy $\piv^*$ in  \eqref{eq:pis}.  Recall from Remark \ref{rem:VolNorm} that $\piv^*(t,X^*_t,Z_t)$ is the volatility adjusted position. The actual position value is given by $\pivt_t = (\frac{\pi^*_1(t,X^*_t,Z_t)}{\sig_{1,t}}, \frac{\pi^*_2(t,X^*_t,Z_t)}{\sig_{2,t}})$, where $(\sig_{1,t})$ and $(\sig_{2,t})$ are the volatility of the spot and futures prices, respectively. 
 
Our first observation is that $\piv^*$ is directly  proportional to the risk-tolerance function $\del(x)$, as has been observed in other studies such as \cite{karatzas1998methods} and \cite{zariphopoulou2001solution}. In other words, the larger the risk-tolerance, the larger the optimal positions in  \eqref{eq:pis}. If we consider the  special case with $\kappa =0$, then it follows
 that  $g\equiv h\equiv 0$, and the optimal strategy $\piv^*$ reduces to 
\begin{align}
	\piv_M := 
	\del(x)
	\begin{pmatrix}
		\frac{\mu_1-\rho\mu_2}{1-\rho^2}\\
		\frac{\mu_2-\rho\mu_1}{1-\rho^2}
	\end{pmatrix}.
\end{align}
This simple strategy does not depend on $z$. This makes sense because,  when $\kappa=0$, $Z_t$ disappears in the SDE for $F_t$ (see \eqref{eq:F}).   One can loosely interpret $\piv_M$ as the Merton strategy. More generally, when $\kappa \neq 0$, the  incorporation of  the scaled Brownian bridge $Z_t$ in $F_t$ makes the optimal strategy $\piv^*$ dependent on  the time varying functions $g(t)$ and $h(t)$ as well as the stochastic factor $Z_t$.

According to \eqref{eq:pis}, for fixed values of $x$ and $t$,  $\pi^*_1(t,x,z)$ and  $\pi^*_2(t,x,z)$ are affine functions of $z$, with the slopes given by $h(t) - \frac{\kap\,\rho}{1-\rho^2}$ and $-h(t) + \frac{\kap}{1-\rho^2}$ respectively. The slope of this function has an interesting interpretation. Note that a positive log basis $Z_t=\log(S_t/F_t)$ means that  the spot is priced higher than the futures. To take advantage of the anticipated convergence, our strategy is to take   a short position in $S$ and a long position in $F$. The optimal strategy implies that for larger basis  $Z_t$,  the position in $S$ becomes more negative  while the position in $F$ becomes more positive. Similarly, for negative values of $Z_t$ (i.e. $F_t >S_t$),   one expects to take a short position in $F$ and a long position in $S$. 

For $\piv^*$ to be a convergence trading strategy, we expect that  $h(t) - \frac{\kap\,\rho}{1-\rho^2}\le 0$ and $-h(t) + \frac{\kap}{1-\rho^2}\ge0$ or, equivalently,
\begin{align}\label{eq:longshort}
	h(t) \le \min\{\frac{\kap\,\rho}{1-\rho^2},  \frac{\kap}{1-\rho^2}\} = \frac{\kap\,\rho}{1-\rho^2}.
\end{align}
Let's assume that $\rho\ge0$, that is, the futures and spot prices are non-negatively correlated. Then, \eqref{eq:longshort} is always satisfied near maturity, since $\lim_{t\to T^-} h(t) = 0$. In other words, the optimal strategy always becomes a convergence trading strategy once we are close enough to maturity. It may also be the case that \eqref{eq:longshort} is satisfied for all values of $t\in[0,T]$. One sufficient condition is $\gam\le1$ (and $\rho\ge0$), since Lemma \ref{lem:sign} yields that $h(t)\le0$ for all $t$. On the other hand, it is also possible for  \eqref{eq:longshort} to be violated for some parameter values or for $t$ far enough from maturity. We show an example of such a case below.

Figure \ref{fig:VF_gamhalf} illustrates the optimal value function and  optimal trading strategies    for the HARA utility   $U(x) = -\frac{1}{x}$ (i.e. $\gam=0.5$). Setting $x=1$, we consider the  contour plot of the value function for different values of $(t,z)$. It is decreasing in $t$ and increasing as $z$ deviates from zero.  This is intuitive since a longer time to maturity or larger basis implies more potential for making profits. The 95\% confidence region of $(Z_t)$, i.e. $\{(t,z): |z-m(t)| \le 1.96 * \sig(t,t)\}$,  shows that $(Z_t)$ has larger variance near the midpoint of the trading horizon.

The top and bottom right plot of Figure \ref{fig:VF_gamhalf} illustrate the optimal positions $\pi^*_1(t,x,z)$ and $\pi^*_2(t,x,z)$ over values of $(t,z)$ in the 95\% confidence region. Note that for fixed $z$ and $x$ and as $t\to T$, position sizes (i.e. $|\pi^*_i(t,x,z)|$, $i=1,2$) increase and then decrease. The reason for this behavior can be seen from the corresponding function $h$ in Figure \ref{fig:wellposed}, where $|h(t)|$ reached a maximum before vanishing at $T$. Furthermore, note that for fixed $x$ and $t$, $\pi^*_1(t,x,z)$ is decreasing in $z$, while $\pi^*_2(t,x,z)$ is increasing in $z$. As mentioned above, this is because $h(t)<0$ and \eqref{eq:longshort} is satisfied.

  Figure \ref{fig:VF_exp} corresponds to the case with exponential utility (i.e. $\gam=0$). Note that the value function and   optimal positions are much less sensitive to $t$ up till near maturity. Furthermore, the positions appear   to  change monotonically as time approaches maturity. This is in contrast to the    power utility case where  position sizes first peak  then decrease towards maturity.

\begin{figure}[tb]
\centerline{
\adjustbox{trim={0.03\width} {0.05\height} {0.09\width} {0.1\height},clip}
{\includegraphics[scale=0.27, page=1]{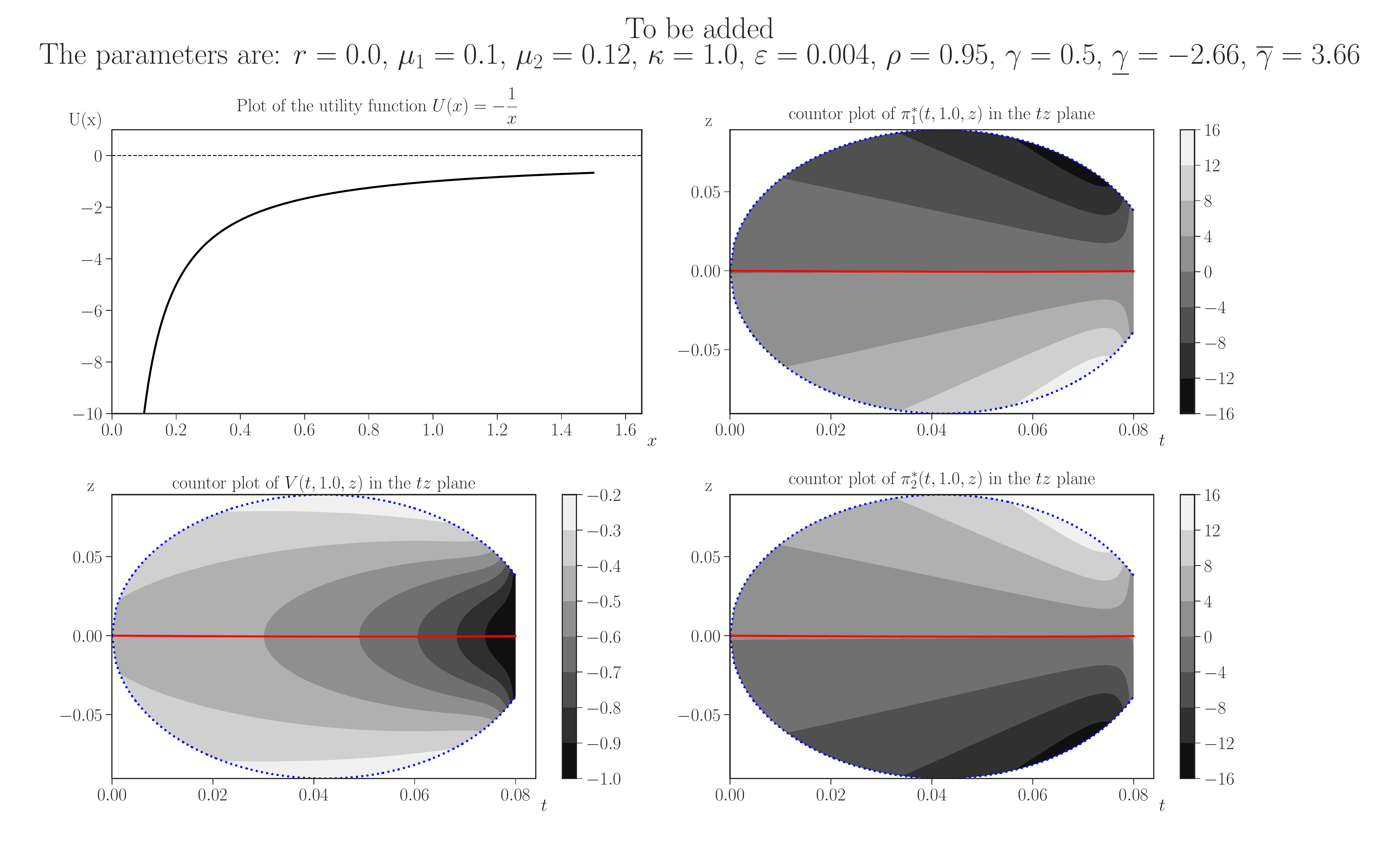}}
}
\caption{The value function and   optimal trading strategy, with  the HARA utility function $U(x) = -\frac{1}{x}$ (i.e. $\gam=0.5$; top left).  Bottom left: contour plot of the value function $V(t,x,z)$ over $(t,z)$ with $x=1.0$, shaded in the 95\% confidence region of $(Z_t)$, i.e. $\{(t,z): |z-m(t)| \le 1.96 * \sig(t,t)\}$ (see Figure \ref{eq:SFZ_sim}). Top and  bottom right panels are the respective contour plots of $\pi^*_1(t,1,z)$ and  $\pi^*_2(t,1,z)$ in the 95\% confidence region of $(Z_t)$.
\label{fig:VF_gamhalf}}
\end{figure}

\begin{figure}[tb]
\centerline{
\adjustbox{trim={0.03\width} {0.05\height} {0.09\width} {0.11\height},clip}
{\includegraphics[scale=0.27, page=1]{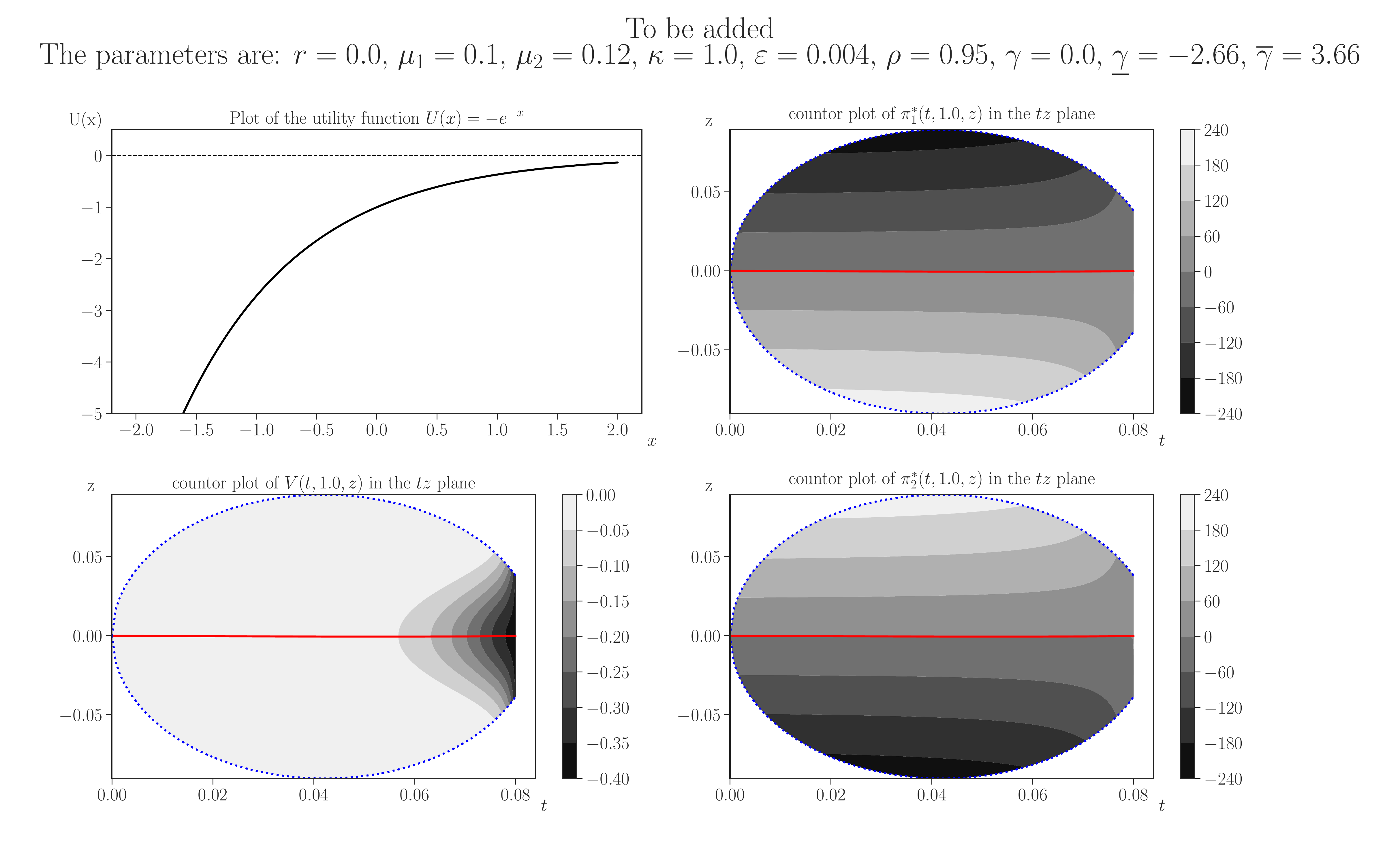}}
}
\caption{The counterpart of Figure \ref{fig:VF_gamhalf} for the exponential utility function $U(x) = -\ee^{-x}$ (i.e. $\gam=0$).
\label{fig:VF_exp}}
\end{figure}

In the previous two examples with $\gam = 0.5$ and $\gam =0$,  the optimal strategies implies convergence trading in that $\pi^*_1$ is  decreasing  and $\pi^*_2$ is  increasing in $z$. This was expected since $\rho\ge0$ and $\gam\le1$ and \eqref{eq:longshort} was satisfied.

Next, we consider a case where \eqref{eq:longshort} does not hold for all values of $t$. With  the utility function  $U(x) = 2\sqrt{x}$ (i.e. $\gam=2$), the chosen parameter values, $\kap=1$ and $\rho=0.95$, imply that  $\frac{\kap\,\rho}{1-\rho^2}\approx 9.74$. Thus, $h(t)$ does not satisfy \eqref{eq:longshort} (see also Figure \ref{fig:wellposed}). From Figure \ref{fig:VF_gam2_zoom},  we observe that at time $t$ away from maturity (say, $t=0.0799$), $\pi^*_1(t,x,z)$ is increasing in $z$ while $\pi^*_2(t,x,z)$ is decreasing. In particular, for large positive values of $z$, meaning that the spot is higher priced than the futures, it is optimal to long   $S$ and short $F$. This is in contrast to a typical convergence trading strategy. Nevertheless, the optimal strategy from our model eventually becomes convergence trading as time approaches maturity. 

To better understand the  optimal trading strategy near maturity, we note that the   mean reversion rate of the Brownian bridge, i.e. $\kap/(T+\eps-t)$, is time varying and becomes much larger near  maturity $T$. Therefore, the force of mean reversion will quickly eliminate any deviation of the stochastic basis from its mean. As a result, convergence trading is optimal near maturity. However, further away from maturity, the mean reversion rate is much smaller, so  any deviation of the stochastic basis from its mean would take longer to get corrected. As it turns out, for a sufficiently risk seeking investor (i.e. $\gam>1$), it is optimal to bet on the deviation from the mean \emph{not} to get immediately corrected. The optimal position for this case is based on the expectation that the basis will diverge (or not immediately converge). Hence, the position is the opposite of a convergence trading.

%
%


\begin{figure}[tb]
\centerline{
\adjustbox{trim={0.06\width} {0.0\height} {0.11\width} {0.3\height},clip}
{\includegraphics[scale=0.23, page=1]{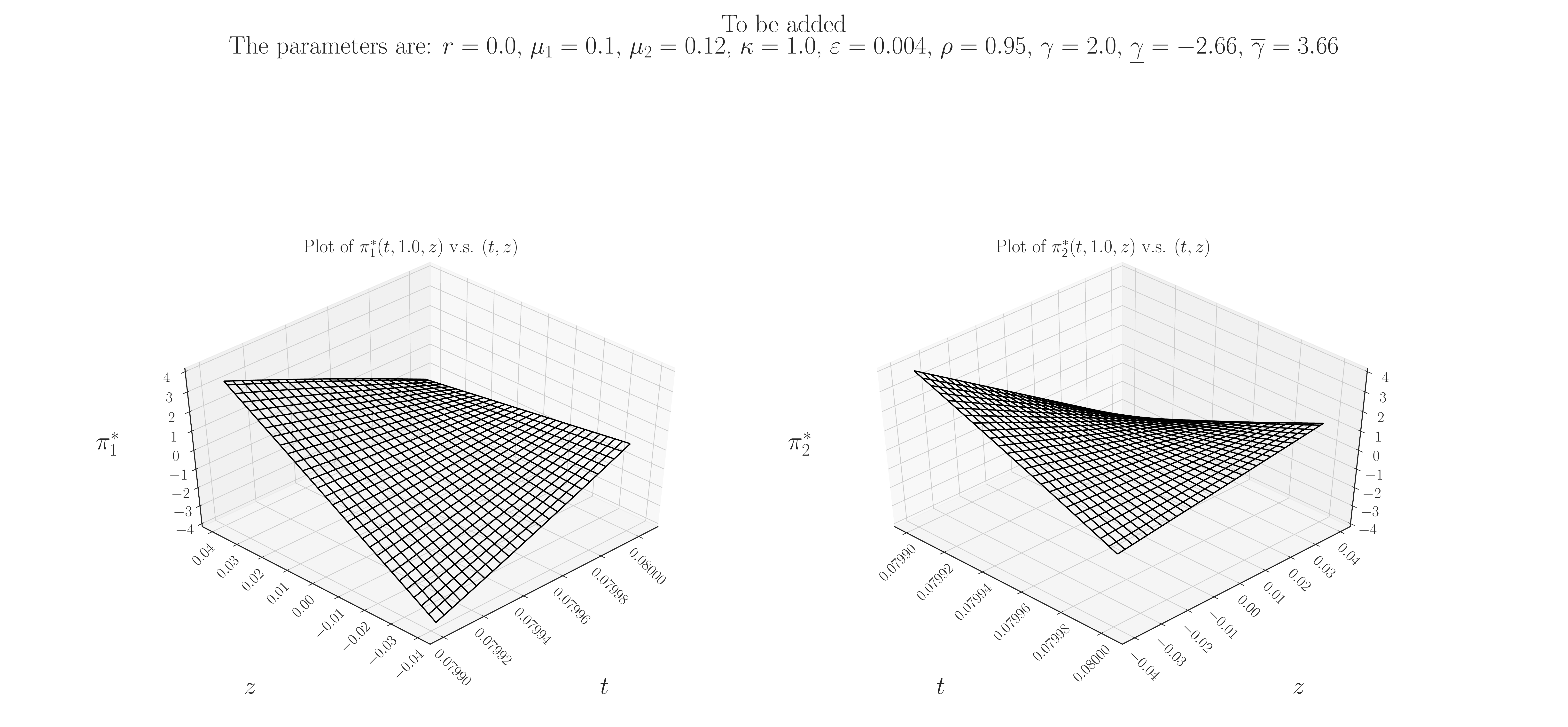}}
}
\caption{Surface plots of the optimal trading strategies, $\pi^*_1(t,1,z)$ (left) and  $\pi^*_2(t,1,z)$ (right), for $0.0799\le t\le 0.08=T$ and $|z-m(t)| \le 1.96 * \sig(t,t)$ under  the utility  $U(x) = 2\sqrt{x}$ ($\gam=2$). At time $0.08$ (maturity), the   strategy trading on convergence     in the sense that $\pi^*_1$ is decreasing in $z$, while $\pi^*_2$ is increasing in $z$. However, at $t=0.0799$, the   strategy is not convergence trading, as $\pi^*_1$ is increasing in $z$ while $\pi^*_2$ is decreasing in $z$.
\label{fig:VF_gam2_zoom}}
\end{figure}

\begin{figure}[tb]
\centerline{
\adjustbox{trim={0.05\width} {0.05\height} {0.09\width} {0.11\height},clip}
{\includegraphics[scale=0.25, page=1]{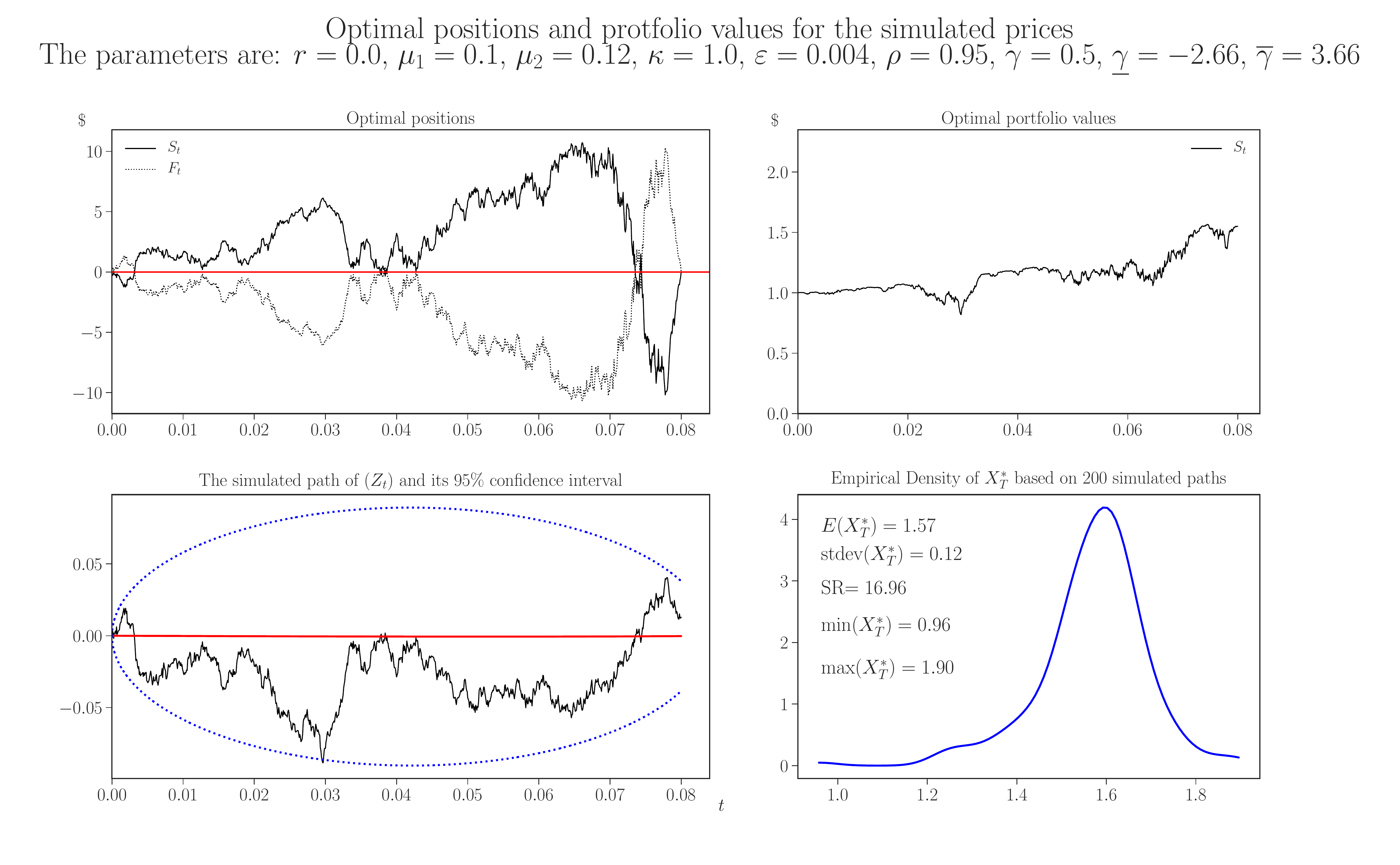}}
}
\caption{Optimal positions in $S$ and $F$ (top left) for one simulated path of $Z$ (bottom left), with $\gam=0.5$ and $x^*=0$.    Top right: the corresponding path of the portfolio value.  Bottom right: the empirical density of the optimal terminal wealth $X^*_T$ bases on 200 simulated price paths, all of which started with the initial wealth $X^*_0=1$. Values of other parameters  are taken from  Figure \ref{eq:SFZ_sim}.
\label{eq:pispf_sim}}
\end{figure}

Figure \ref{eq:pispf_sim} illustrates the optimal positions in $S$ and $F$, along with the    corresponding portfolio value, based on the simulated paths and parameters from Figure \ref{eq:SFZ_sim}. In this   well-posed scenario  with $U(x) = -\frac{1}{x}$ ($\gam=0.5$),   the optimal positions take opposite signs and tend to  move in opposite directions. Moreover, whenever the basis    $Z$ is negative, the position in $S$ is positive and  position in $F$ is negative. The positions are reversed when $Z$ is positive. Furthermore,  the optimal positions fluctuate increasingly more rapidly and more sensitive to the basis near  maturity. 


In Figure \ref{eq:pispf_sim} we also see  the optimal portfolio value over time. Note that portfolio experience significant drawdowns as $Z$ diverges from equilibrium. This is a common characteristic of convergence trading strategies. The bottom right plot of Figure \ref{eq:pispf_sim} shows the empirical density of the optimal terminal wealth $X^*_T$, based on 200 simulated paths, all of which started with the initial wealth $X^*_0=1$. The estimated expected value of terminal wealth is $\E(X^*_T) = 1.57$, with an standard deviation of $0.12$. Since the trading horizon is 20/250 = 0.08 year (i.e. one month), the annualized net expected return is $0.57/0.08 \approx 7.1$ and the annualized volatility is $0.12/\sqrt{0.08}\approx0.42$. This leads to a Sharpe ratio of $7.1/0.42\approx 17$ for this simulated example.


\begin{figure}[h]
\centerline{
\adjustbox{trim={0.04\width} {0.2\height} {0.05\width} {0.25\height},clip}
{\includegraphics[scale=0.375, page=1]{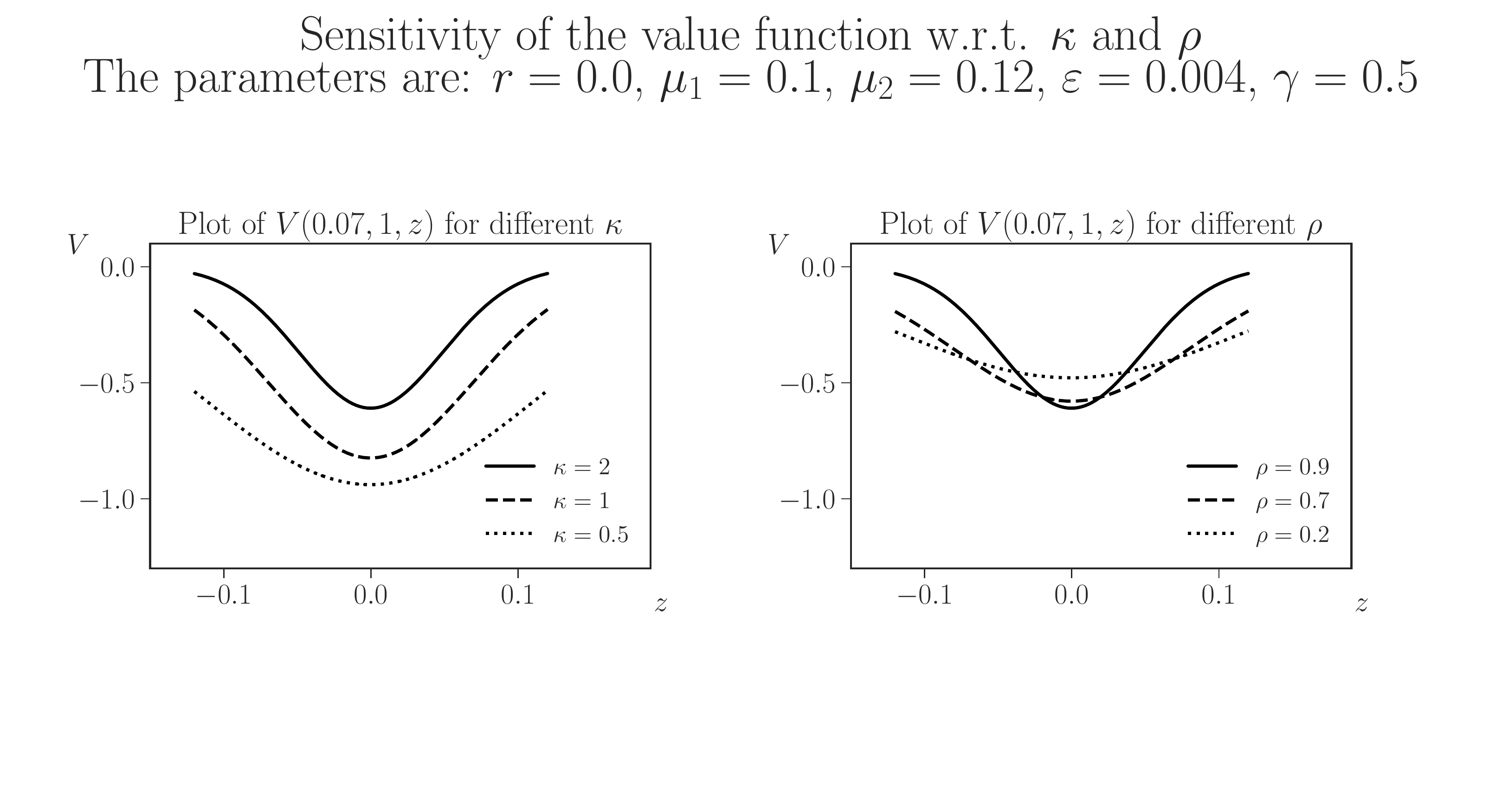}}
}
\caption{Sensitivity of the value function with respect to   $\kap$ and $\rho$.  With  $t=0.07$ and $x=1.0$ fixed, we show  $V(0.07,1,z)$ as a function of  $z$ (left) for three values of $\kap$   and a fixed value of $\rho=0.9$, and (right) for three values of $\rho$ with   $\kap=2$. Other than $\kap$ and $\rho$,  parameters values are as in Figure \ref{fig:VF_gamhalf}.
\label{fig:VF_kaprho}}
\end{figure}

Turning to  the value function, we show in Figure \ref{fig:VF_kaprho} its sensitivity with respect to $\kappa$ and $\rho$. For simplicity, we keep $t=0.07$ and $x=1.0$ fixed and show $V(0.07,1,z)$ as a function of $z$. All other parameters (including the utility function) are as in Figure \ref{fig:VF_gamhalf}. The value function tends to reach its minimum value at $z=0$, and a higher $\kappa$ means a higher value function in the figure. This is intuitive since a higher speed of mean reversion indicates higher profitability from trading the basis at any given   level. On correlation increases, the expected utility improves in case of a large mispricing in the assets (i.e. large values of $|z|$) and it decreases when mispricing is small (i.e. for values $|z|$ near zero). This also makes sense. Note that by \eqref{eq:Bb}, the volatility of the basis $(Z_t)$ is $2(1-\rho)$, which is decreasing in $\rho$. Therefore, large correlation indicates less noise in the basis. If there is no mispricing, less noise means less probability of mispricing which lowers the profitability of basis trading. However, if there is a large mispricing, then less noise means that the mispricing will be corrected faster, which increases the profitability and the expected utility.

\begin{figure}[h]
\centerline{
\adjustbox{trim={0.03\width} {0.2\height} {0.05\width} {0.23\height},clip}
{\includegraphics[scale=0.375, page=1]{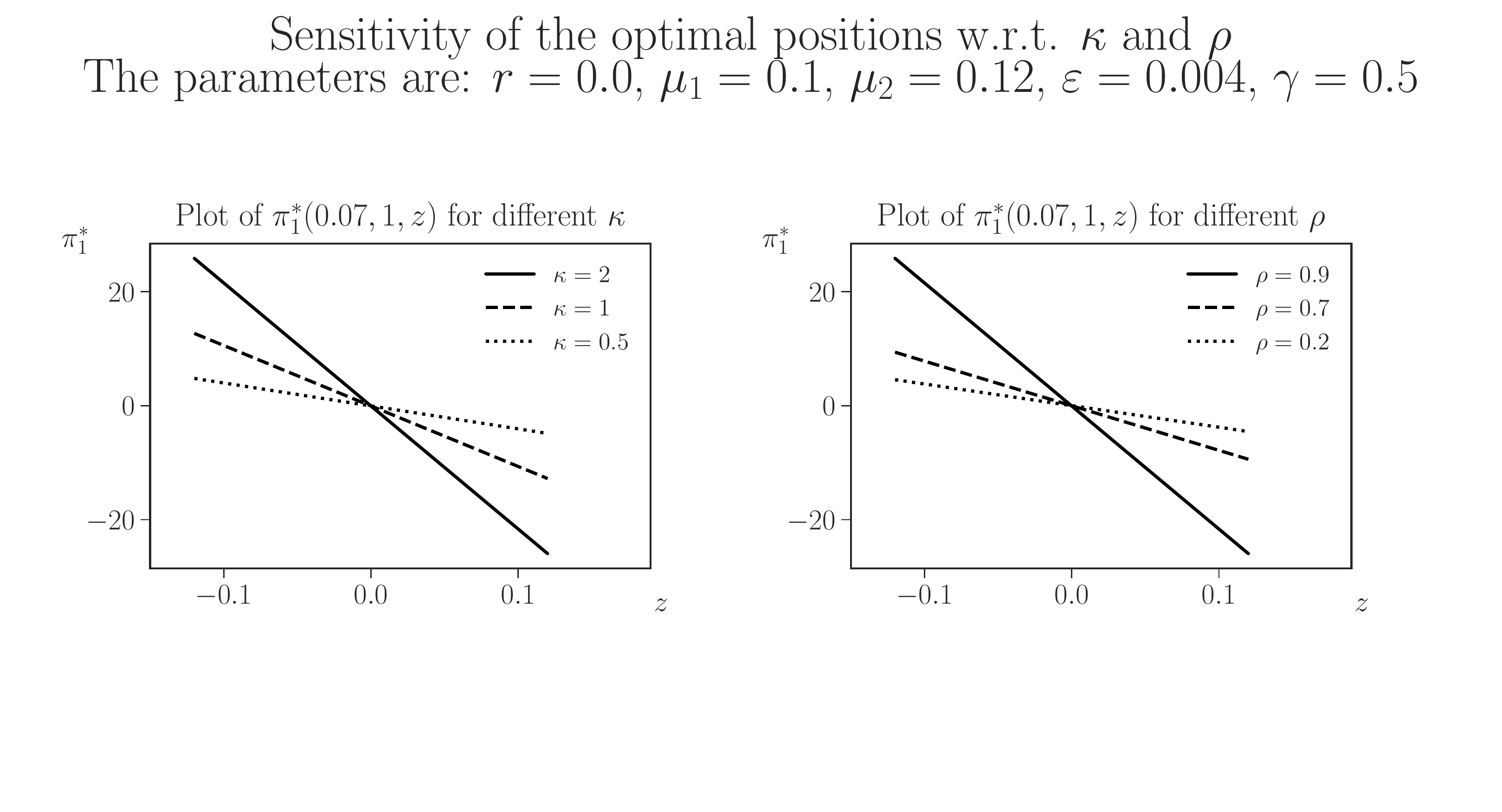}}
}
\caption{Sensitivity of the optimal spot position  with respect to  $\kap$ and $\rho$. Setting $t=0.07$ and $x=1$, we show $\pi^*_1(0.07,1,z)$ as a function of basis $z$ (right) for different  $\kap$ with $\rho=0.9$, and (right)  for different  $\rho$ with $\kap=2$. A higher mean reversion rate or higher correlation increases the long/short position size when $z$ is negative/positive. Other than  $\kap$ or $\rho$,   parameters values are the same as in Figure \ref{fig:VF_gamhalf}. 
\label{fig:piStar1_kaprho}}
\end{figure}

Figure \ref{fig:piStar1_kaprho} shows the sensitivity of the optimal spot position  with respect to the  parameters $\kap$ and $\rho$. For different values of  $\kap$ and $\rho$, we see that  $\pi_1^*$ is a decreasing function of $z$.   A higher   mean-reversion rate $\kap$ or higher correlation $\rho$ makes $\pi_1^*$ more downward sloping in $z$, which means that the investor will take a larger long (resp. short) position  when $z$ is negative (resp. positive). The financial intuition is as follows. A higher $\kap$ means that the basis tends to converge to zero faster. This represents a more profitable trade,  leading the investor to take a larger position in the basis. When the basis is negative, the spot price is below the futures price, so the optimal strategy is to long the spot. The opposite holds when the basis is positive. Increasing $\rho$ decreases the random fluctuations   in the basis,  which also implies that the basis will show  a stronger    tendency to converge. Hence,  it is optimal to take a larger position.

\section{Concluding Remarks}\label{sect-conclude}

In summary, we have analyzed dynamic trading problem in which a risk-averse investor trades the stochastic  basis to maximize expected utility. We describe the non-convergent basis process by a stopped scaled Brownian bridge.  This leads us to solve analytically and numerically the associated HJB equation and illustrate the optimal trading strategies.

There are a number of directions for future research. In addition to basis trading, one can analyze other futures trading strategies, including  futures rolling, and  futures portfolios. Futures are also commonly used in many exchange-traded funds  (ETFs) for tracking the spot price (see e.g. \cite{LeungWard}). It would also be interesting to consider variations of our model. For example, in \eqref{eq:S} and \eqref{eq:F}, one could interpret that the spot price is instantaneously leading  the futures price in  that the futures does not have any feedback on the spot. This is a question of price discovery, and we refer to the empirical studies, \cite{chan1992further}, \cite{kawaller1987temporal}, and \cite{stoll1990dynamics} for more background. It is possible to adapt our solution method by interchanging the leading roles of $S$ and $F$, and more generally incorporate more sophisticated lead-lag effects between futures and spot prices into the trading problem.

\appendix 

\section{Proof of Proposition \ref{prop:NA}}\label{app:NA}
This proposition follows from existing results, such as Theorem 2.4 of \cite{CheriditoFilipovicYor2005}. For readers' convenience, we provide our own proof in our notations.

From \eqref{eq:S} and \eqref{eq:F}, we have
\begin{align}
	\dd
	\begin{pmatrix}
		S_t\\
		F_t
	\end{pmatrix} = 
	\begin{pmatrix}
		S_t & 0\\
		0   & F_t
	\end{pmatrix}
	\Sig \left(\lamv(t,Z_t) \dd t + \dd\Wv_t\right);\quad 0\le t\le T.
\end{align}
Therefore, by Girsanov's theorem, $\Qb$ is a risk-neutral measure if $(Y_t)_{0\le t\le T}$ given by \eqref{eq:MPR} is a $\Pb$-martingale. It only remains to show that $(Y_t)_{0\le t\le T}$ is a $\Pb$-martingale. 
	
	Define the processes $(\Wt_{t,1},\Wt_{t,2})_{t\ge0}$ as follows,
\begin{align}
\begin{cases}
		\Wt_{t,1} := \sqrt{\frac{1-\rho}{2}}\, W_{t,1} - \sqrt{\frac{1+\rho}{2}}\,W_{t,2},\\
		\Wt_{t,2} := \sqrt{\frac{1+\rho}{2}}\, W_{t,1} + \sqrt{\frac{1-\rho}{2}}\,W_{t,2},		
\end{cases}
\end{align}
and note that $(\Wt_{t,1},\Wt_{t,2})_{t\ge0}$ is a 2-dimensional standard Brownian motion. By substituting $W_{t,i}$ with $\Wt_{t,i}$, $i\in\{1,2\}$, in \eqref{eq:MPR} and noting \eqref{eq:lamb}, one obtains
\begin{align}\label{eq:MPR2}
	Y_t = 1 &+ \sum_{i=1}^2 \int_0^t Y_u \left(\mut_i + \frac{\kapt_i}{T-u+\eps}\,Z_u \right)\dd\Wt_{u,i}; 0\le t\le T,
\end{align}
where $\mut_i$ and $\kapt_i$, $i\in\{1,2\}$, are some appropriately defined constants.
	
	Next, consider the (deterministic) time change
\begin{align}
	\altt(t) &:= 2(1-\rho)\int_0^t (T-u+\eps)^{-2\kap}\dd u\\
	\label{eq:timechange}
	&=2(1-\rho)
	\begin{cases}
		\frac{1}{2\kap-1}\left[(T-t+\eps)^{1-2\kap} - (T+\eps)^{1-2\kap}\right];&\quad \text{if } \kap\ne 0.5,\\
		\log\left(\frac{T+\eps}{T-t+\eps}\right); &\quad \text{if } \kap = 0.5,
	\end{cases}
\end{align}
	and define $\Tt:=\altt(T)<\infty$. Note that $\altt(t)$ strictly increases from 0 to $\Tt$ as $t$ goes from 0 to $T$. Let us also define $t(\altt): [0,\Tt]\to[0,T]$ as the inverse of $\altt(t)$.

	Using Knight's time-change theorem (see Theorem 3.4.13 on page 179 of \cite{KartzasShreve1991}), the processes $(B_{\altt, 1}, B_{\altt,2})_{0\le \altt \le \Tt}$ defined by
\begin{align}\label{eq:B}
	B_{\altt(t), i} := \sqrt{2(1-\rho)}\int_0^t (T-u+\eps)^{-k} \dd \Wt_{u,i};\quad i\in\{1,2\},
\end{align}
are independent standard Brownian motions. Furthermore, from \eqref{eq:Zsol}, it follows that
\begin{align}\label{eq:Zsol2}
	Z_t = Z_0 \left(1-\frac{t}{T+\eps}\right)^\kap + (\mu_1-\mu_2) A(t)
		+ (T-t+\eps)^\kap B_{\altt(t),1}.
\end{align}

Consider the time-changed process $(\Yt_\altt := Y_{t(\altt)})_{0\le\altt\le \Tt}$ and note that $(Y_t)_{0\le t\le T}$ is a martingale if and only if $(\Yt_\altt)_{0\le\altt\le \Tt}$ is a martingale. By \eqref{eq:MPR2}, \eqref{eq:B}, and \eqref{eq:Zsol2}, $\Yt$ satisfies
\begin{align}
	\Yt_\altt = 1 &+ \sum_{i=1}^2\int_0^{t(\altt)} \Yt_v f_i\big(v,B_{v,1}\big)\dd B_{v,i};\quad 0\le \altt \le \Tt,
\end{align}
where we have defined the functional
\begin{align}\label{eq:fi}
	f_i\big(\altt, x(\cdot)\big) :=	\mut_i + 
		&\frac{\kapt_i}{T-t(\altt)+\eps}\\
		&{}\times
	 	\left[
	 		Z_0 \left(1-\frac{t(\altt)}{T+\eps}\right)^\kap + (\mu_1-\mu_2) A\big(t(\altt)\big) + (T-t(\altt)+\eps)^\kap x(\altt)
	 	\right],
\end{align}
for any continuous function $x:[0,\Tt]\to\Rb$. Note that,
\begin{enumerate}
	\item[(i)] $\fv(\altt,x)=\big(f_1(\altt,x), f_2(\altt,x)\big)$ is progressively measurable in the sense of Definition 3.5.15 on page 199 of \cite{KartzasShreve1991}; and,
	
	\item[(ii)] for any $0\le w \le \Tt$, there exists a constant $K>0$ such that
	\begin{align}\label{eq:Fail}
		\|\fv\big(\altt, x(.)\big)\| \le K (1+ \max_{0\le v\le \altt} |x(v)|);\quad 0\le \altt \le w.
	\end{align}
 This  follows from \eqref{eq:fi}, since $\eps>0$ and $A(t)$ and $t(\altt)$ are continuous (and, thus, bounded) on $[0,T]$ and $[0,\Tt]$, respectively.
\end{enumerate}
With conditions (i) and (ii)  above, we now apply Corollary 3.5.16 on page 200 of \cite{KartzasShreve1991} to conclude that $(\Yt_\altt)_{0\le\altt\le \Tt}$ is a martingale. Hence, $(Y_t)_{0\le t\le T}$ is also a martingale, as we set out to prove.\qed
	

%
%
\section{Proof of Theorem \ref{thm:HJB} }\label{app:HJB}
Applying the operator $\Lc_{\piv}$ in \eqref{eq:Opr}  to the function $\vp(t,x,z)$ in \eqref{eq:HJB}, we get
	\begin{align}
		\Lc_{\piv} \vp(t,x,z) = &\frac{1}{2}\vp_{xx} 
		\left\|
		\Sig^\top\piv
		+ \frac{\vp_x}{v_{xx}}
		\Sig^{-1}
		\begin{pmatrix}
			\mu_1\\
			\mu_2 + \frac{\kap\,z}{T-t+\eps}	
		\end{pmatrix}
		+ \frac{\vp_{xz}}{v_{xx}}\Sig^\top
		\begin{pmatrix}
			1\\
			-1		
		\end{pmatrix}
		\right\|^2\\
		& -\frac{1}{2}\left[a\left(\frac{z}{T-t+\eps}\right)^2 + \frac{2b\,z}{T-t+\eps} + c\right]\frac{\vp_x^2}{\vp_{xx}}
		-(1-\rho)\frac{\vp_{xz}^2}{\vp_{xx}}\\
		&-\left(\mu_1-\mu_2-\frac{\kap\,z}{T-t+\eps}\right)\,\frac{\vp_{x}\vp_{xz}}{\vp_{xx}},
	\end{align}
	where we defined the constants
	\begin{align}\label{eq:abc}
		a:= \frac{\kap^2}{1-\rho^2},\quad
		b:= \frac{\kap\big(\mu_2-\rho \mu_1\big)}{1-\rho^2},\quad
		c:= \frac{\mu_1^2 + \mu_2^2 - 2\rho\mu_2\mu_1}{1-\rho^2},
	\end{align}
	and matrix
	\begin{align}
		\Sig :=
		\begin{pmatrix}
			1 &0\\
			\rho &\sqrt{1-\rho^2}
		\end{pmatrix}.
	\end{align}
	Assume, for now, that $\vp_{xx}(t,x,z)<0$, for $(t,x,z)\in[0,T]\times\Dc\times\Rb$, which will hold once we show that the solution is of the desired form \eqref{eq:vp}. 
	Then, it   follows that, for $(t,x,z)\in[0,T]\times\Dc\times\Rb$,
\begin{align}\label{eq:Opt}
	\piv^*(t,x,z) &:= \underset{\piv\in\Rb^2}{\arg\max}\,\Lc_{\piv} \vp(t,x,z)\\
	&=-\frac{\vp_x}{\vp_{xx}}(\Sig\Sig^{\top})^{-1}
	\begin{pmatrix}
		\mu_1\\
		\mu_2 + \frac{\kap\,z}{T-t+\eps}	
	\end{pmatrix}
	-\frac{\vp_{xz}}{\vp_{xx}}
	\begin{pmatrix}
		1\\
		-1		
	\end{pmatrix},
\end{align}
and the HJB equation \eqref{eq:HJB} becomes to the second-order nonlinear PDE
\begin{align}\label{eq:HJB2}
\begin{split}
	\vp_t(t,x,z) &= 
	  \frac{1}{2}\left[a\left(\frac{z}{T-t+\eps}\right)^2 + \frac{2b\,z}{T-t+\eps} + c\right]
	  \frac{\vp_x^2}{\vp_{xx}}\\*
	&\quad
	+ \left(\mu_1-\mu_2-\frac{\kap\,z}{T-t+\eps}\right)
	  	\left(\frac{\vp_{x}\vp_{xz}}{\vp_{xx}} - \vp_z\right)\\
	&\quad
	+ (1-\rho)\left(\frac{\vp_{xz}^2}{\vp_{xx}}-\vp_{zz}\right),
\end{split}
\end{align}
for $(t,x,z)\in[0,T]\times\Dc\times\Rb$ and with the terminal condition $\vp(T,x,z) = U(x)$. Substituting the ansatz\footnote{This ansatz can be derived through the power transformation introduced by \cite{zariphopoulou2001solution}. See, also, \cite{KimOmberg1996}, among others.}
\begin{align}\label{eq:ansatz}
	\vp(t,x,z) = U(x)\,\exp\left(f(t)+g(t)\,z+\frac{1}{2}h(t)\,z^2\right),
\end{align}
$(t,x,z)\in [0,T]\times\Dc\times\Rb$, in \eqref{eq:HJB2} leads to the following identity involving the unknown functions $f$, $g$, and $h$,
\begin{align}
	& f'(t) + (1-\rho)\gam g^2(t) + \gam(\mu_1-\mu_2) g(t) + (1-\rho) h(t) - \frac{1}{2}(1-\gam) c\\
	& + z \left[g'(t) +\gam\left(2(1-\rho)h(t) - \frac{\kap}{T+\eps-t}\right)g(t)
	+ \gam(\mu_1-\mu_2)h(t) - \frac{(1-\gam)b}{T+\eps-t}\right]\\
	&+ \frac{1}{2}\,z^2\left[
	h'(t) +	2(1-\rho)\gam h^2(t) - 2\frac{\gam\kap}{T+\eps-t}\,h(t) - \frac{(1-\gam)a}{(T+\eps-t)^2}
	\right]=0,
\end{align}
for all $t\in[0,T]$ and $z\in\Rb$. To obtain this equation, we have used the identity
\begin{align}
	\frac{(U'(x))^2}{U''(x)} = (1-\gam)\,U(x);\quad x\in\Dc,
\end{align}
which follows directly from \eqref{eq:HARA} and \eqref{eq:HARA2}. Furthermore, substituting \eqref{eq:ansatz} in the terminal condition $\vp(T,x,z) = U(x)$ yields $f(T)=g(T)=h(T)=0$. As a result, $h$, $g$, and $f$ are given by
\begin{align}
\label{eq:hODE}
&\begin{cases}
	-h'(t) = 2(1-\rho)\gam h^2(t) - 2\frac{\gam\kap}{T+\eps-t}\,h(t) - \frac{(1-\gam)a}{(T+\eps-t)^2};
	&\quad t\in[0,T],\\
	h(T)=0,
\end{cases}\\
\label{eq:gODE}                      
&\begin{cases}                        
	g'(t) +\gam\left(2(1-\rho)h(t) - \frac{\kap}{T+\eps-t}\right)g(t)\\
	\qquad ~+ \gam(\mu_1-\mu_2)h(t) - \frac{(1-\gam)b}{T+\eps-t}=0; &\quad t\in[0,T],\\
	g(T)=0,
\end{cases}	
\intertext{and}
\label{eq:fODE}
&\begin{cases}
	-f'(t) = (1-\rho)\gam g^2(t) + \gam(\mu_1-\mu_2) g(t)\\
	\qquad\qquad + (1-\rho) h(t) - \frac{1}{2}(1-\gam) c; &\quad t\in[0,T],\\
	f(T)=0.
\end{cases}
\end{align}

Note that \eqref{eq:hODE} is the Riccati equation \eqref{eq:h_ode-1st}, which we have solved in Lemma \ref{lem:h}. Now that we obtained $h$, we can find $g$ using \eqref{eq:gODE}, and then find $f$ using \eqref{eq:fODE}. The latter equations yield \eqref{eq:g} and \eqref{eq:f} respectively. 

%
%

\section{Proof of Theorem \ref{thm:VF}}\label{app:VF}


Let $\vp$ be the solution of \eqref{eq:HJB}, given by Theorem \ref{thm:HJB}. We prove the following two assertions:
\begin{enumerate}
	\item[(a)] For all $(t,x,z)\in[0,T]\times\Dc\times\Rb$, we have 
	\begin{align}
		v(t,x,z) \ge \Eb_{t,x,z}\left(U(X^\pi_T)\right), \quad \text{for all } \piv\in\Ac,
	\end{align}
	where $X^\pi_T$ is the terminal wealth generated by an admissible strategy  $\piv\in\Ac$.
	\item[(b)] There exists an admissible strategy $\piv^*\in\Ac_i$ such that the corresponding wealth process $(X^*_t)_{t\in[0,T]}$ satisfies
	\begin{align}\label{eq:Optimality}
		v(t,x,z) = \Eb_{t,x,z}\left(U(X^*_T)\right);\quad (t,x,z)\in [0,T]\times\Dc\times \Rb.
	\end{align}
\end{enumerate}
As a consequence, (a) implies  $v\ge V$ and (b) implies $v\le V$. Hence, if these assertions hold, we obtain $V\equiv v$ as desired.

We consider two cases, namely, $0\le\gam<1$ and $\gam>1$.\vspace{1ex}

\noindent\textbf{Case 1 ($0\le\gam<1$):} In this case, by Lemma \ref{lem:sign}, the function $h$ is negative. Therefore, $\vp$ which is of the form
\begin{align}\label{eq:HJB-Form}
	\vp(t,x,z) = U(x)\,\exp\left(f(t)+g(t)\,z+\frac{1}{2}h(t)\,z^2\right),
\end{align}
is bounded in $z$. Since $v$ has polynomial growth in $x$, standard verification results such as Theorem 3.8.1 on page 135 of \cite{FlemingSoner2006} yield assertions (a) and (b) above. Furthermore, the  optimal control is given by \eqref{eq:Opt} which, in turn, yields \eqref{eq:pis}.\vspace{1ex}
		
\noindent\textbf{Case 2 ($\gam>1$):} For these values of $\gam$, Lemma \ref{lem:sign} shows that $h$ is positive. Thus, $v$, the solution of the HJB equation, has exponential growth in $z$. For this case, we provide our own verification result by checking assertions (a) and (b) above.

To show (a), for all $n>0$, define the stopping times
	\begin{equation}
		\tau_n := \inf\Big\{t\in(0,T]: \max\{\int_0^t \|\piv_u\|^2 du, \vert X^\pi_t \vert, \vert Z_t \vert\} > n \Big\}\bigg\},
	\end{equation}
	and note that $\lim_{n\to+\infty}\tau_n=T$, $\Pb$-almost surely. By It\^o's formula, we have
	\begin{align}
	\begin{split}
		v(t,X_{\tau_n},Z_{\tau_n}) ={} &v(t,x,z)\\
		&+ \int_t^{\tau_n} \Big[v_t(u,X_u,Z_u) +
		\left(\mu_1-\mu_2-\frac{\kap\,Z_u}{T-u+\eps}\right)\, \vp_z(u,X_u,Z_u)\\
		&\hspace{4em}
		+ (1-\rho)\vp_{zz}(u,X_u,Z_u) + \Lc_{\piv_u}v(u,X_u,Z_u)
		\Big]\dd u\\
		& + \int_t^{\tau_n} \left[v_x(u,X_u,Z_u)\piv_u 
		+ v_z(u,X_u,Z_u)
		\begin{pmatrix}
			1\\
			-1
		\end{pmatrix}
		\right]^\top
 		\Sig\,
		\dd \Wv_u,
	\end{split}\label{eq:Ito}
	\end{align}
	where $\Sig$ is given by \eqref{eq:abc}. The first integral on the right side is non-positive because $v$ solves \eqref{eq:HJB}. Furthermore, by the definition of $\tau_n$, the integrands of the second integral is uniformly bounded, therefore, taking the conditional expectation on both sides of (\ref{eq:Ito}) yields
\begin{equation}\label{eq:ExpectedIto}
	\Eb_{t,x,z} v(\tau_n,X_{\tau_n},Z_{\tau_n}) \le v(t,x,z).
\end{equation}
Note that for $\gam>1$, \eqref{eq:HARA2} yields that $U>0$. Thus,
\begin{align}
	v(t, X_t,Z_t) = U(X_t)\,\exp\left(f(t)+g(t)\,Z_t+\frac{1}{2}h(t)\,Z_t^2\right)>0.
\end{align}
Finally, by letting $n\to+\infty$, Fatou's lemma yields that $\Eb_{t,x,z} U(X_T) \le v(t,x,z)$.

To show (b), define $\piv^*$ by \eqref{eq:pis}, namely,
\begin{align}\label{eq:optimalPort}
	\piv^*(t,x,z) = (x-x^*) \alv^*(t,z);\quad (t,x,z) \in [0,T]\times\Dc\times\Rb,
\end{align}
where, for ease of notation, we have defined
\begin{align}
	\alv^*(t,z) :=
	\gam\,\left[
		\begin{pmatrix}
			\frac{\mu_1-\rho\mu_2}{1-\rho^2} + g(t)\\
			\frac{\mu_2-\rho\mu_1}{1-\rho^2} - g(t)
		\end{pmatrix}
		+
		z\,
		\begin{pmatrix}
			h(t) - \frac{\kap\,\rho}{1-\rho^2}\\
			-h(t) + \frac{\kap}{1-\rho^2}
		\end{pmatrix}
	\right].
\end{align}
We need to show that $\big(\piv^*(t,X^*_t,Z_t)\big)_{t\in[0,T]}\in\Ac$ and that \eqref{eq:Optimality} is satisfied.

	To show admissibility of $\big(\piv^*(t,X^*_t,Z_t)\big)_{t\in[0,T]}$, we proceed as follows. By \eqref{eq:Budget} and \eqref{eq:optimalPort}, the wealth process $(X^*_t)_{t\in[0,T]}$ corresponding to $(\piv^*_t)$ satisfies
	\begin{align}
		\frac{\dd (X^*_t-x^*)}{X^*_t-x^*} = \alv^*(t,Z_t)^\top
		\begin{pmatrix}
			\mu_1\\
			\mu_2 + \frac{\kap\,Z_t}{T-t+\eps}
		\end{pmatrix}\,\dd t
		+\alv^*(t,Z_t)^\top
 		\Sig\, \dd \Wv_t;\quad t\in[0,T].
	\end{align}
	Since $(X^*_t-x^*)_{t\in[0,T]}$ is a stochastic exponential, conditions (i)-(iii) in Definition \ref{def:Admiss} hold if the following integrability condition is satisfied
	\begin{align}\label{eq:integrability}
		&\int_0^T\left(
		\left\|\alv^*(t,Z_t)^\top
		\begin{pmatrix}
			\mu_1\\
			\mu_2 + \frac{\kap\,Z_t}{T-t+\eps}
		\end{pmatrix}\right\|
		+
		\left\|\alv^*(t,Z_t)^\top
 		\Sig\right\|^2
		\right)\dd t <\infty;\quad \Pb\text{-a.s.}
	\end{align}
	Since $g(t)$ and $h(t)$ are bounded on the interval $[0,T]$, we have
	\begin{align}\label{eq:Estimate}
		\left\|\alv^*(t,Z_t)^\top
		\begin{pmatrix}
			\mu_1\\
			\mu_2 + \frac{\kap\,Z_t}{T-t+\eps}
		\end{pmatrix}\right\|
		+
		\left\|\alv^*(t,Z_t)^\top
 		\Sig\right\|^2
		< C_1 Z_t^2 + C_2 |Z_t| + C_3, 
	\end{align}
	for some positive constants $C_1$, $C_2$, and $C_3$. Thus, \eqref{eq:integrability} holds if $\int_0^T Z^2_t\dd t<\infty$, $\Pb$-almost surely. The latter conditions holds since, by Fubini's theorem,
	\begin{align}\label{eq:Ez2}
		\Eb\int_0^T Z^2_t\dd t = \int_0^T(\sig(t,t)+m(t)^2)\dd t <\infty,
	\end{align}
	where $m(t)$ and $\sig(t,t)$ are bounded functions on $t\in[0,T]$ given by \eqref{eq:mean} and \eqref{eq:Cov} respectively. We have proved that $\big(\piv^*(t,X^*_t,Z_t)\big)_{t\in[0,T]}\in\Ac$.
	
	It only remains to prove \eqref{eq:Optimality}. Since $\piv^*$ satisfies \eqref{eq:Opt}, we have
\begin{align}\label{eq:HJBopt}
\vp_t + \left(\mu_1-\mu_2-\frac{\kap\,z}{T-t+\eps}\right)\, \vp_z 
+ (1-\rho)\vp_{zz} \displaystyle+ \Lc_{\piv^*(t,x,z)} \vp = 0,
\end{align}
for $(t,x,z)\in[0,T]\times\Dc\times\Rb.$ Therefore, by setting $\piv_t=\piv^*_t :=\piv^*(t,X^*_t,Z_t)$ in the argument that yielded \eqref{eq:ExpectedIto}, we obtain
\begin{equation}\label{eq:ExpectedIto2}
	\Eb_{t,x,z} \vp(\tau_n,X^*_{\tau_n},Z_{\tau_n}) = \vp(x,z,t);\quad (t,x,z)\in [0,T]\times\Dc\times \Rb.
\end{equation}
Next, we show that the process $\big(v(t, X^*_t,Z_t)\big)_{t\in[0,T]}$ is a martingale. Once we show this, \eqref{eq:Optimality} follows from \eqref{eq:ExpectedIto2} by taking the limit as $n\to+\infty$.

With a slight abuse of notation, let us define $Y_t:= v(t, X^*_t,Z_t)$, $0\le t\le T$. Set $\piv_t=\piv^*_t$ in \eqref{eq:Ito} and use \eqref{eq:HJBopt} to obtain
\begin{align}\label{eq:Ito2}
	\dd Y_t &=
	\left[v_x(t,X_t^*,Z_t)\piv^*_t 
	+ v_z(t,X_t^*,Z_t)
	\begin{pmatrix}
		1\\
		-1
	\end{pmatrix}
	\right]^\top
	\Sig\,\dd \Wv_t\\
	&=\Big[\frac{v(t,X_t^*,Z_t)}{\frac{\gam}{\gam-1}(X_t-x^*)}(X^*_t-x^*) \alv^*(t,Z_t)\\
	&\qquad\qquad+ v(t,X_t^*,Z_t)(g(t)+h(t)Z_t)
	\begin{pmatrix}
		1\\
		-1
	\end{pmatrix}
	\Big]^\top
	\Sig\,
	\dd \Wv_t\\
	&= v(t,X_t^*,Z_t)\left[\left(1-\frac{1}{\gam}\right)\alv^*(t,Z_t) 
	+ \big(g(t)+h(t)Z_t\big)
	\begin{pmatrix}
		1\\
		-1
	\end{pmatrix}
	\right]^\top
	\Sig\,
	\dd \Wv_t\\
	& = Y_t \left(\alvt(t) + \betavt(t) Z_t\right) \dd \Wv_t,
\end{align}
for $t\in[0,T]$, where we have defined $\alvt(t)$ and $\betavt(t)$ such that
\begin{align}
	\alvt(t) + \betavt(t) z = \left(1-\frac{1}{\gam}\right)\alv^*(t,z) 
	+ \big(g(t)+h(t)z\big)
	\begin{pmatrix}
		1\\
		-1
	\end{pmatrix};\quad 0\le t\le T, z\in\Rb.
\end{align}
Note, in particular, that $\alvt(t)$ and $\betavt(t)$ are bounded on $[0,T]$ since $g(t)$ and $h(t)$ are continuous on $[0,T]$. We can then repeat the argument in the proof of Proposition \ref{prop:NA} in Appendix \ref{app:NA} to show that, for the time change $\altt(t)$ in \eqref{eq:timechange}, the time-changed processes $(\Yt_\altt := Y_{t(\altt)})_{0\le\altt\le \Tt}$ is a martingale. Thus, $(Y_t:= v(t, X^*_t,Z_t)\big)_{0\le t\le T}$ is a martingale, as we set out to prove.


\section{Proof of Lemma \ref{lem:sign}}

We first show that, for $\gam>1$, $h(t) >0$ for all $t\in [0,T)$.  With  $\gam>1$, ODE \eqref{eq:h_ode-1st} yields
\begin{align}
	h'(T)=\frac{(1-\gam)\kap^2}{\eps^2(1-\rho^2)}<0.
\end{align}
Since $h(T)=0$, we must have $h(t)>0$ in a left neighborhood of $T$, say $t\in[s,T)$ for some $0\le s<T$. For $h(t)>0$ for all $t\in[0,T)$, we show that $h(t)\ne0$ for all $t\in[0,s)$.

Assume otherwise, that is, $h(t_0)=0$ for some $t_0\in[0,s)$. Then, it follows from \eqref{eq:h_ode-1st} that 
\begin{align}
	h'(t_0)=\frac{(1-\gam)\kap^2}{(1-\rho^2)(T+\eps-t_0)^2}<0.
\end{align}
This implies  that $h(t)<0$ in a right neighborhood of $t_0$, say $t\in(t_0,s']$, for some $s'<s$. Now, since the smooth function $h$ increases from $h(s')<0$ to $h(s)>0$ on the interval $[s',s]$, there must be a point $\tau\in(s',s)$ such that $h(\tau) = 0$ and $h'(\tau)\ge0$. However, by \eqref{eq:h_ode-1st}, 
\begin{align}
	h'(\tau)=\frac{(1-\gam)\kap^2}{(1-\rho^2)(T+\eps-\tau)^2}<0,
\end{align}
which is a contradiction. Therefore, there cannot be such $t_0$, as desired. The proof for the case with  $0\le\gam<1$ follows from a similar argument and is thus omitted.
 

\begin{thebibliography}{}

\bibitem[\protect\citeauthoryear{Adjemian, Garcia, Irwin, and Smith}{Adjemian
  et~al.}{2013}]{adjemian2013}
Adjemian, M.~K., P.~Garcia, S.~Irwin, and A.~Smith (2013).
\newblock Non-convergence in domestic commodity futures markets: Causes,
  consequences, and remedies.
\newblock {\em US Department of Agriculture, Economic Research Service\/}~{\em
  115}, 155381.

\bibitem[\protect\citeauthoryear{Brennan and Schwartz}{Brennan and
  Schwartz}{1988}]{BrennanSchwartz1988}
Brennan, M.~J. and E.~S. Schwartz (1988).
\newblock Optimal arbitrage strategies under basis variability.
\newblock In M.~Sarnat (Ed.), {\em Essays in Financial Economics}. North
  Holland.

\bibitem[\protect\citeauthoryear{Brennan and Schwartz}{Brennan and
  Schwartz}{1990}]{BrennanSchwartz1990}
Brennan, M.~J. and E.~S. Schwartz (1990).
\newblock Arbitrage in stock index futures.
\newblock {\em The Journal of Business\/}~{\em 63\/}(1), S7--S31.

\bibitem[\protect\citeauthoryear{Brody, Hughston, and Macrina}{Brody
  et~al.}{2008}]{brody2008information}
Brody, D.~C., L.~P. Hughston, and A.~Macrina (2008).
\newblock Information-based asset pricing.
\newblock {\em International Journal of Theoretical and Applied Finance\/}~{\em
  11\/}(01), 107--142.

\bibitem[\protect\citeauthoryear{Bulthuis, Concha, Leung, and Ward}{Bulthuis
  et~al.}{2017}]{bulthuis2016optimal}
Bulthuis, B., J.~Concha, T.~Leung, and B.~Ward (2017).
\newblock Optimal execution of limit and market orders with trade director,
  speed limiter, and fill uncertainty.
\newblock {\em International Journal of Financial Engineering\/}~{\em
  4\/}(2-3), 1750020.

\bibitem[\protect\citeauthoryear{Cartea, Gan, and Jaimungal}{Cartea
  et~al.}{2018}]{CarteaGanJaimungal2018}
Cartea, {\'A}., L.~Gan, and S.~Jaimungal (2018).
\newblock Trading cointegrated assets with price impact.
\newblock Mathematical Finance, Forthcoming 2018, arXiv:1807.01428 [q-fin.TR].

\bibitem[\protect\citeauthoryear{Cartea and Jaimungal}{Cartea and
  Jaimungal}{2016}]{cartea2016algorithmic}
Cartea, {\'A}. and S.~Jaimungal (2016).
\newblock Algorithmic trading of co-integrated assets.
\newblock {\em International Journal of Theoretical and Applied Finance\/}~{\em
  19\/}(06), 1650038.

\bibitem[\protect\citeauthoryear{Cartea, Jaimungal, and Kinzebulatov}{Cartea
  et~al.}{2016}]{cartea2016algorithmic2}
Cartea, {\'A}., S.~Jaimungal, and D.~Kinzebulatov (2016).
\newblock Algorithmic trading with learning.
\newblock {\em International Journal of Theoretical and Applied Finance\/}~{\em
  19\/}(4), 1650028.

\bibitem[\protect\citeauthoryear{Chan}{Chan}{1992}]{chan1992further}
Chan, K. (1992).
\newblock A further analysis of the lead--lag relationship between the cash
  market and stock index futures market.
\newblock {\em The Review of Financial Studies\/}~{\em 5\/}(1), 123--152.

\bibitem[\protect\citeauthoryear{Cheridito, Filipovi\'{c}, and Yor}{Cheridito
  et~al.}{2005}]{CheriditoFilipovicYor2005}
Cheridito, P., D.~Filipovi\'{c}, and M.~Yor (2005).
\newblock Equivalent and absolutely continuous measure changes for
  jump-diffusion processes.
\newblock {\em Ann. Appl. Probab.\/}~{\em 15\/}(3), 1713--1732.

\bibitem[\protect\citeauthoryear{Chiu and Wong}{Chiu and
  Wong}{2011}]{ChiuWong2011}
Chiu, M. and H.~Wong (2011).
\newblock Mean-variance portfolio selection of cointegrated assets.
\newblock {\em Journal of Economic Dynamics and Control\/}~{\em 35},
  1369--1385.

\bibitem[\protect\citeauthoryear{Cox, Ingersoll, and Ross}{Cox
  et~al.}{1981}]{CoxIngersollRoss1981}
Cox, J.~C., J.~F. Ingersoll, and S.~A. Ross (1981).
\newblock The relation between forward and futures price.
\newblock {\em Journal of Financial Economics\/}~{\em 9\/}(December), 321--346.

\bibitem[\protect\citeauthoryear{Dai, Zhong, and Kwok}{Dai
  et~al.}{2011}]{dai2011optimal}
Dai, M., Y.~Zhong, and Y.~K. Kwok (2011).
\newblock Optimal arbitrage strategies on stock index futures under position
  limits.
\newblock {\em Journal of Futures markets\/}~{\em 31\/}(4), 394--406.

\bibitem[\protect\citeauthoryear{Fleming and Soner}{Fleming and
  Soner}{2006}]{FlemingSoner2006}
Fleming, W.~H. and H.~M. Soner (2006).
\newblock {\em Controlled Markov Processes and Viscosity Solutions}, Volume~25.
\newblock springer New York.

\bibitem[\protect\citeauthoryear{Garcia, Irwin, and Smith}{Garcia
  et~al.}{2015}]{Garcia2015}
Garcia, P., S.~H. Irwin, and A.~Smith (2015).
\newblock Futures market failure?
\newblock {\em American Journal of Agricultural Economics\/}~{\em 97\/}(1),
  40--64.

\bibitem[\protect\citeauthoryear{Irwin, Garcia, Good, and Kunda}{Irwin
  et~al.}{2011}]{irwin2011}
Irwin, S.~H., P.~Garcia, D.~L. Good, and E.~L. Kunda (2011).
\newblock Spreads and non-convergence in chicago board of trade corn, soybean,
  and wheat futures: Are index funds to blame?
\newblock {\em Applied Economic Perspectives and Policy\/}~{\em 33\/}(1),
  116--142.

\bibitem[\protect\citeauthoryear{Karatzas and Shreve}{Karatzas and
  Shreve}{1991}]{KartzasShreve1991}
Karatzas, I. and S.~Shreve (1991).
\newblock {\em {B}rownian Motion and Stochastic Calculus}.
\newblock Springer-Verlag.

\bibitem[\protect\citeauthoryear{Karatzas and Shreve}{Karatzas and
  Shreve}{1998}]{karatzas1998methods}
Karatzas, I. and S.~Shreve (1998).
\newblock {\em Methods of Mathematical Finance}, Volume~39.
\newblock Springer.

\bibitem[\protect\citeauthoryear{Kawaller, Koch, and Koch}{Kawaller
  et~al.}{1987}]{kawaller1987temporal}
Kawaller, I.~G., P.~D. Koch, and T.~W. Koch (1987).
\newblock The temporal price relationship between {S}\&{P} 500 futures and the
  {S}\&{P} 500 index.
\newblock {\em The Journal of Finance\/}~{\em 42\/}(5), 1309--1329.

\bibitem[\protect\citeauthoryear{Kim and Omberg}{Kim and
  Omberg}{1996}]{KimOmberg1996}
Kim, S. and E.~Omberg (1996).
\newblock Dynamic nonmyopic portfolio behavior.
\newblock {\em The Review of Financial Studies\/}~{\em 9\/}(1), 141--161.

\bibitem[\protect\citeauthoryear{Kitapbayev and Leung}{Kitapbayev and
  Leung}{2018}]{KitapLeung2017}
Kitapbayev, Y. and T.~Leung (2018).
\newblock Optimal mean-reverting spread trading: Nonlinear integral equation
  approach.
\newblock {\em Annals of Finance\/}~{\em 13\/}(2), 181--203.

\bibitem[\protect\citeauthoryear{Korn and Kraft}{Korn and
  Kraft}{2004}]{KornKraft2004}
Korn, R. and H.~Kraft (2004).
\newblock On the stability of continuous-time portfolio problems with
  stochastic opportunity set.
\newblock {\em Mathematical Finance\/}~{\em 14\/}(3), 403--414.

\bibitem[\protect\citeauthoryear{Lee and Papanicolaou}{Lee and
  Papanicolaou}{2016}]{lee2016pairs}
Lee, S. and A.~Papanicolaou (2016).
\newblock Pairs trading of two assets with uncertainty in co-integration's
  level of mean reversion.
\newblock {\em International Journal of Theoretical and Applied Finance\/}~{\em
  19\/}(08), 1650054.

\bibitem[\protect\citeauthoryear{Leung, Li, and Li}{Leung
  et~al.}{2018}]{leung2017optimal}
Leung, T., J.~Li, and X.~Li (2018).
\newblock Optimal timing to trade along a randomized {B}rownian bridge.
\newblock {\em International Journal of Financial Studies\/}~{\em 6\/}(3), 75.

\bibitem[\protect\citeauthoryear{Leung and Li}{Leung and
  Li}{2016}]{LeungXin2016}
Leung, T. and X.~Li (2016).
\newblock {\em Optimal Mean Reversion Trading: Mathematical Analysis And
  Practical Applications}.
\newblock World Scientific.

\bibitem[\protect\citeauthoryear{Leung and Ward}{Leung and
  Ward}{2015}]{LeungWard}
Leung, T. and B.~Ward (2015).
\newblock The golden target: Analyzing the tracking performance of leveraged
  gold {ETF}s.
\newblock {\em Studies in Economics and Finance\/}~{\em 32\/}(3), 278--297.

\bibitem[\protect\citeauthoryear{Leung and Yan}{Leung and
  Yan}{2018}]{LeungYan2018}
Leung, T. and R.~Yan (2018).
\newblock Optimal dynamic pairs trading of futures under a two-factor
  mean-reverting model.
\newblock {\em International Journal of Financial Engineering\/}~{\em 5\/}(3),
  1850027.

\bibitem[\protect\citeauthoryear{Leung and Yan}{Leung and
  Yan}{2019}]{LeungYan2019}
Leung, T. and R.~Yan (2019).
\newblock A stochastic control approach to managed futures portfolios.
\newblock {\em International Journal of Financial Engineering\/}~{\em 6\/}(1),
  1950005.

\bibitem[\protect\citeauthoryear{Liu and Longstaff}{Liu and
  Longstaff}{2004}]{LiuLongstaff2004}
Liu, J. and F.~A. Longstaff (2004).
\newblock Losing money on arbitrage: Optimal dynamic portfolio choice in
  markets with arbitrage opportunities.
\newblock {\em The Review of Financial Studies\/}~{\em 17\/}(3), 611--641.

\bibitem[\protect\citeauthoryear{Liu and Timmermann}{Liu and
  Timmermann}{2013}]{LiuTimmermann2012}
Liu, J. and A.~Timmermann (2013).
\newblock Optimal convergence trade strategies.
\newblock {\em Review of Financial Studies\/}~{\em 26\/}(4), 1048--1086.

\bibitem[\protect\citeauthoryear{Merton}{Merton}{1969}]{Merton1969}
Merton, R.~C. (1969).
\newblock Lifetime portfolio selection under uncertainty: {T}he continuous-time
  case.
\newblock {\em Review of Economics and Statistics\/}~{\em 51\/}(3), 247--257.

\bibitem[\protect\citeauthoryear{Modest and Sundaresan}{Modest and
  Sundaresan}{1983}]{Modest1983}
Modest, D.~M. and M.~Sundaresan (1983).
\newblock The relationship between spot and futures prices in stock index
  futures markets: Some preliminary evidence.
\newblock {\em Journal of Futures Markets\/}~{\em 3\/}(1), 15--41.

\bibitem[\protect\citeauthoryear{Mudchanatongsuk, Primbs, and
  Wong}{Mudchanatongsuk et~al.}{2008}]{Primbsetal2008}
Mudchanatongsuk, S., J.~Primbs, and W.~Wong (2008).
\newblock Optimal pairs trading: a stochastic control approach.
\newblock In {\em Proceedings of the American Control Conference}, Seattle,
  Washington, pp.\  1035--1039.

\bibitem[\protect\citeauthoryear{Stoll and Whaley}{Stoll and
  Whaley}{1990}]{stoll1990dynamics}
Stoll, H.~R. and R.~E. Whaley (1990).
\newblock The dynamics of stock index and stock index futures returns.
\newblock {\em Journal of Financial and Quantitative analysis\/}~{\em 25\/}(4),
  441--468.

\bibitem[\protect\citeauthoryear{Tourin and Yan}{Tourin and
  Yan}{2013}]{tourin2013dynamic}
Tourin, A. and R.~Yan (2013).
\newblock Dynamic pairs trading using the stochastic control approach.
\newblock {\em Journal of Economic Dynamics and Control\/}~{\em 37\/}(10),
  1972--1981.

\bibitem[\protect\citeauthoryear{Zariphopoulou}{Zariphopoulou}{2001}]{zariphopoulou2001solution}
Zariphopoulou, T. (2001).
\newblock A solution approach to valuation with unhedgeable risks.
\newblock {\em Finance and stochastics\/}~{\em 5\/}(1), 61--82.

\end{thebibliography}
\end{document}